\documentclass[11pt]{article}
\usepackage[a4paper, total={6in, 8in}]{geometry}
\usepackage{enumerate}
\usepackage{amsmath}
\usepackage{amssymb,latexsym}
\usepackage{amsthm}
\usepackage{color}
\usepackage{cancel}
\usepackage{graphicx}
\usepackage{cite}
\usepackage{changes}
\usepackage{lineno}
\usepackage{hyperref}
\usepackage{comment}
\usepackage{amscd}

\usepackage{mathtools}
\usepackage{float}
\usepackage{authblk}

\newtheorem{theorem}{Theorem}[section]

\newtheorem{lemma}[theorem]{Lemma}

\newtheorem{proposition}[theorem]{Proposition}

\theoremstyle{definition}
\newtheorem{definition}[theorem]{Definition}
\newtheorem{remark}[theorem]{Remark}

\newcommand{\Z}{\mathbb{Z}}
\newcommand{\N}{\mathbb{N}}
\newcommand{\p}{\mathbb{P}}
\newcommand{\Id}{\text{Id}}

\title{Cryptanalysis of a PIR Scheme based on Linear Codes over Rings}

\author[1,2]{Luana Kurmann}
\author[1]{Svenja Lage}
\author[2]{Violetta Weger}

\affil[1]{\small{Institute of Communications and Navigation, German Aerospace Center (DLR)}}
\affil[2]{\small{Technical University of Munich, Germany}}

\date{March 26, 2026}

\begin{document}
\maketitle

\let\thefootnote\relax
\footnotetext{E-mail addresses: luana.kurmann@dlr.de, svenja.lage@dlr.de, violetta.weger@tum.de}

\begin{abstract}
In this paper we present an attack on a recently proposed code-based Private Information Retrieval (PIR) scheme. Indeed, the server can retrieve the index of the desired file with high probability in polynomial time. The attack relies on the fact that random codes over finite rings are free with high probability and that the dimension of the rowspan of the query matrix decreases when the rows corresponding to the desired index are removed. 
\end{abstract}

\noindent\textbf{MSC2020:}{ 11T71, 94A60, 15B33}\\
\textbf{Keywords:}{ Private information retrieval; Codes over rings; Privacy.}

\markboth{L. Kurmann, S. Lage, V. Weger}{Cryptanalysis of ring-linear PIR Scheme}

\section{Introduction}
A Private Information Retrieval (PIR) scheme, as introduced in \cite{Chor1998} and \cite{Kushilevitz1997}, allows users to retrieve files from a database such that the server does not learn anything about the requested file index. There are two main types of PIR schemes which differ in the number of servers. The files are either stored on one or several servers. In the latter case, one can achieve information-theoretic security, but only by assuming that the servers are non-colluding \cite{Chor1998}, which is an assumption that is in practice difficult to fulfil.
Considering only one server, information-theoretic security is only possible if the user downloads the whole database \cite{Chor1998}. However, as the database usually consists of lots of files, this solution is not practical. Therefore, computationally secure single-server schemes are usually studied. These PIR protocols are denoted as computational PIR (cPIR). While most of the existing cPIR schemes are based on classic number-theoretic problems, such as integer factorization \cite{ntpir1,Kushilevitz1997,ntpir3,ntpir4}, the recent advances in quantum computation require novel and quantum-secure cPIR solutions.\\

In a post-quantum world, PIR schemes which rely on hard problems that are secure against an attack by a classical and a possible quantum computer are needed. Most of these quantum-resistant PIR schemes are based on hard lattice problems such as the (Ring) Learning With Errors (LWE) problem (e.g. see \cite{XPIR2016, SimplePIR}). However, it is important that alternatives to lattice-based PIR schemes are developed; as for example schemes based on hard problems from coding theory. Indeed, national agencies such as the NIST standardizing quantum-resistant encryption schemes have selected lattice-based and code-based schemes  to achieve a diversification of cryptographic primitives.\\

The first code-based PIR scheme has been presented by Holzbaur, Hollanti and Wachter-Zeh in \cite{HHWZ2020}. However, Bordage and Lavauzelle showed in \cite{Bordage2021} that this scheme is not secure by providing an efficient attack where the index of the desired file can be learned by detecting a rank difference in the query matrix. In this paper, we study the code-based scheme presented by Bodur, Mart\'{\i}nez-Moro and Ruano in \cite{Bodur}. This scheme is a modification of the scheme in \cite{HHWZ2020} and is based on codes over rings to avoid the attack in \cite{Bordage2021}.\\

In this paper, we show that the scheme is flawed in two ways. Firstly, we prove that the scheme in \cite{Bodur} is not complete, i.e., there exist examples where the user cannot uniquely recover the desired file. We thus provide an additional condition which guarantees completeness. Secondly, we show that  scheme (with or without the additional condition) is not secure.
Given the query matrix of the user, the server is able to recover the index of the desired file with high probability in polynomial time. Thus, up to date there is still no known secure and efficient code-based PIR scheme.\\

The structure of the paper is as follows. In Section \ref{sec: Preliminaries}, we give a short introduction to codes over rings. In Section \ref{sec: Description}, we describe the scheme presented in \cite{Bodur} and in Section \ref{sec: repair}, we prove that the user can retrieve the desired file if the additional condition is added. Finally, we present the attack in Section \ref{sec: attack}.

\section{Preliminaries}
\label{sec: Preliminaries}
Throughout this paper let $m \coloneqq \prod_{i = 1}^\ell p_i^{e_i}$ be a composite number, where $p_i$ are distinct prime numbers and $e_i \in \N_{> \, 1}$ for all $1 \leq i \leq \ell$. We write $\Z_m$, respectively $\Z_{p_i^{e_i}}$ to denote the ring of integers modulo $m$, respectively modulo $p_i^{e_i}$. Moreover, let $R$ be the polynomial ring $R\coloneqq \Z_m[x] /\langle x^n-1\rangle$ for some positive integer $n$ such that gcd$(m,n) = 1$. This condition ensures that $R$ is a principal ideal ring (see \cite{KANWAR1997334}). A cyclic code over $\Z_m^n$ can be seen as an ideal of $R$, which is therefore generated by one polynomial. For simplicity, we sometimes identify the code $C \subseteq \Z_m^n$ with the corresponding ideal in $R$. We denote by $\Id_k$ the $(k \times k)$-identity matrix, where $k \in \N$.

\begin{definition}\cite{vanAsch2008}
\label{def: matrix-prod code}
    Let $u,s$ be two positive integers. Let $C_1, C_2, \dots, C_s$ be cyclic codes in $R$ and $M$ an $s \times u$ matrix over $\Z_m \subset R$. Then, the \textit{matrix-product code} is a code in $R^s$ and is defined as
    \begin{equation*}
        [C_1, C_2, \dots, C_s]M \coloneqq \{(c_1(x), c_2(x), \dots, c_s(x))\cdot M \mid c_i(x) \in C_i, \, 1 \leq i \leq s\}.
    \end{equation*}
\end{definition}

For $1 \leq i \leq \ell$ let $\varphi_i$ denote the map
\begin{equation*}
    \varphi_i: \Z_m \to \Z_{p_i^{e_i}}, \quad x \mapsto x \pmod {p_i^{e_i}}.
\end{equation*}
For vectors or matrices over $\Z_m$, as well as for polynomials in $R$, $\varphi_i$ is applied entrywise, i.e., by reducing every entry, respectively every coefficient, modulo $p_i^{e_i}$. Similarly, we denote by $\Phi$ the map
\begin{equation*}
    \Phi: \Z_m \to \bigoplus_{i = 1}^\ell\Z_{p_i^{e_i}}, \quad x \mapsto (\varphi_1(x), \varphi_2(x), \dots, \varphi_\ell(x)),
\end{equation*}
and extend it to vectors, matrices and polynomials in the same way.

\subsection{Codes over $\Z_{p^{e}}$}
Throughout this section, we denote by $C$ a linear code of length $n$ over $\Z_{p^{e}}$ where $p$ is prime and $e$ is a positive integer. More precisely, $C$ is a $\Z_{p^{e}}$-submodule of $(\Z_{p^{e}})^n$. A generator matrix of such a code $C$ is a matrix whose rows span $C$.

\begin{definition}\cite{Park2009,Norton2000Hamming}
\label{def: standard form}
    A generator matrix $G$ of $C$ is said to be in \textit{standard form} if G is up to permutation of columns of the form
    \begin{equation*}
        G = \begin{pmatrix} 
            \Id_{k_0}  &  A_{0,1} & A_{0,2} & A_{0,3} & \cdots & A_{0,e-1} & A_{0,e} \\ 
            0 & p \Id_{k_1} & p A_{1,2} & p A_{1,3} & \cdots & pA_{1,e-1} & pA_{1,e} \\ 
            0 & 0 & p^2 \Id_{k_2} & p^2 A_{2,3} & \cdots & p^2 A_{2,e-1} & p^2 A_{2,e} \\ 
            \vdots & \vdots & \vdots & \vdots &  & \vdots & \vdots \\ 
            0 & 0 & 0 & 0 & \cdots & p^{e-1} \Id_{k_{e-1}} & p^{e-1} A_{e-1,e}
        \end{pmatrix}
    \end{equation*}
    for $A_{i,j} \in \Z_{p^e}^{k_i \times k_j}$.
    The unique tuple $(k_0, k_1, \dots, k_{e-1})$ is called the \textit{subtype} of $C$, the first $k_0$ rows of $G$ are called \textit{free part} and the remaining rows the \textit{non-free part}. Moreover, the \textit{rank of $C$} is
    \begin{equation*}
        rk(C) \coloneqq \sum_{i = 0}^{e-1}k_i.
    \end{equation*}
\end{definition}

Similarly as over finite fields, we can always find such a generator matrix in standard form.

\begin{theorem}\textnormal{\cite[Theorem 3.3]{Norton2000Hamming}}
    Any linear code $C$ over $\Z_{p^{e}}$ has a generator matrix in standard form. Moreover, 
    we call $k \coloneqq \sum_{i=0}^{e-1} k_i (e-i)$ the $\Z_{p}$-dimension. The number of codewords of $C$ is
    \begin{equation*}
        |C| = p^k.
    \end{equation*}
\end{theorem}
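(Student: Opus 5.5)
We argue in two stages: first the existence of a generator matrix in standard form, by a Gaussian elimination adapted to the chain ring $\Z_{p^e}$, and then the count $|C| = p^k$ read off from that matrix. Throughout we use that $\Z_{p^e}$ is a local chain ring whose ideals are exactly $p^j\Z_{p^e}$, $0 \le j \le e$. Every nonzero element can be written $a = p^j u$ with $u$ a unit and $j = v_p(a) < e$.

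\textbf{Existence.} Start from any generator matrix of $C$, for instance all codewords as rows. We process the valuations $j = 0,1,\dots,e-1$ in increasing order. At stage $j$, among the rows not yet fixed, look for an entry of minimal valuation $j$, in a column not yet used as a pivot column. Permute that column into the next pivot position and multiply the row by $u^{-1}$, so the pivot becomes exactly $p^j$. Every other entry in that pivot column, in all remaining rows, has valuation $\ge j$, hence equals $p^j$ times some scalar; subtracting the corresponding multiple of the pivot row clears it.

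Once no entry of valuation $j$ remains among the unfixed rows, every remaining row is divisible by $p^{j+1}$, and we pass to stage $j+1$. All operations are elementary row operations or column permutations, so the row span is preserved up to the permutation. Zero rows are discarded. Let $k_j$ be the number of pivots created at stage $j$.

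We also clear the entries above each pivot using the pivot rows. This step is optional, but it gives the block shape displayed in Definition~\ref{def: standard form}. Rows created at stage $j$ are divisible by $p^j$, which gives the factors $p^j A_{j,\cdot}$.

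One subtlety needs care. Clearing with a row of valuation $j$ could in principle reintroduce entries of lower valuation in other rows. This cannot happen, because the row used is itself divisible by $p^j$ and it is subtracted only from rows whose pivot-column entry is also divisible by $p^j$. The entries of the row being modified therefore keep valuation at least $j$ at stage $j$, so earlier stages are not disturbed.

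\textbf{Counting.} Let $g_1,\dots,g_r$ be the rows, with $g$ in block $i$ having pivot $p^i$. Every codeword is $\sum_t \lambda_t g_t$. Since $p^{e-i}g = 0$ for $g$ in block $i$, the coefficient $\lambda_t$ for such a row matters only modulo $p^{e-i}$. This gives a surjection
\[
\prod_{i=0}^{e-1} \left(\Z_{p^{e-i}}\right)^{k_i} \longrightarrow C,
\]
and the domain has $p^{\sum_i k_i(e-i)} = p^k$ elements.

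The main step is injectivity. Suppose $\sum_t \lambda_t g_t = 0$. Look at the pivot columns in order. Because the pivot block is upper triangular with diagonal $p^i\Id_{k_i}$, the first pivot column gives $\lambda_1 \cdot 1 = 0$. Inductively, the pivot column of a row in block $i$ gives $\lambda_t p^i = 0$, provided the contributions of the earlier rows have already vanished. This yields $\lambda_t \equiv 0 \pmod{p^{e-i}}$, which is exactly zero in the corresponding factor. The induction closes because the earlier terms are genuinely zero as vectors, not merely in that one coordinate. Hence the map is a bijection and $|C| = p^k$.

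As a by-product, $k$ and the $k_i$ are invariants of $C$. The subtype can be recovered from the group structure $C \cong \bigoplus_i (\Z_{p^{e-i}})^{k_i}$ of the finite abelian $p$-group $C$, using the uniqueness of its invariant factors. This justifies the word "unique" in Definition~\ref{def: standard form}.
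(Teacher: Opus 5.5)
The paper gives no proof of this statement; it is quoted from \cite[Theorem 3.3]{Norton2000Hamming} and used as a black box, so there is no in-paper argument to compare against. Your proof is the standard self-contained one, and it is correct. Valuation-by-valuation elimination over the chain ring $\Z_{p^e}$ produces the standard form. The resulting pivot structure gives a $\Z_{p^e}$-module isomorphism $\bigoplus_i (\Z_{p^{e-i}})^{k_i} \to C$, which yields $|C| = p^k$. Through the elementary divisors of the finite abelian $p$-group $C$, it also yields the uniqueness of the subtype that Definition~\ref{def: standard form} takes for granted. Your injectivity step is right. In particular, you correctly observe that $\lambda_t p^i = 0$ forces $\lambda_t g_t = 0$ as a vector, because every entry of $g_t$ lies in $p^i\Z_{p^e}$.

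One passage should be tightened. Clearing ``the entries above each pivot'' is in general impossible: an entry of valuation $< j$ in a row of an earlier block cannot be removed using a pivot $p^j$. This is exactly why the blocks $p^iA_{i,j}$ survive in the standard form. What you need, and what is always possible, is clearing within block $j$, since all rows of that block are divisible by $p^j$. That step is therefore not optional if you want diagonal blocks $p^j\Id_{k_j}$, although your counting argument only needs triangularity. Finally, earlier pivots and zero patterns stay intact for a reason other than a valuation bound. Each pivot row has zeros in all previously created pivot columns, because it was among the unfixed rows when those columns were cleared.
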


We can also compute a parity-check matrix $H$ for a linear code $C$ over $\Z_{p^{e}}$, i.e., a matrix $H$ such that $C = \{c \in \Z_{p^e}^n \mid Hc^\top = 0\}$.

\begin{theorem} \textnormal{\cite[Theorem 3.5]{Norton2000Hamming}}
\label{thm: pc-matrix}
     Let $C \subseteq \Z_{p^e}^n$ be a linear code with subtype $(k_0, k_1, \dots, k_{e-1})$ and rank $K:= rk(C)$. Then, a parity-check matrix of $C$ in standard form is up to permutation of columns
    \begin{equation*}
        H \coloneqq \begin{pmatrix}
            B_{0,e} & B_{0,e-1} & \dots & B_{0,1} & Id_{n-K}\\
            pB_{1,e} & pB_{1,e-1} & \dots & pId_{k_{e-1}} & 0\\
            \vdots & \vdots & & \vdots & \vdots\\
            p^{e-1}B_{e-1,e} & p^{e-1}Id_{k_1} & \dots & 0 & 0
        \end{pmatrix},
    \end{equation*}
    where for $0 \leq i < j \leq e$
    \begin{equation*}
        B_{i,j} \coloneqq - \sum_{k = i +1}^{j-1}(B_{i,k}A_{e-j,e-k}^\top) - A_{e-j,e-i}^\top.
    \end{equation*}
    Moreover, $rk(C^\perp) = n-k_0$ and the subtype of $C^\perp$ is $(n-K, k_{e-1}, k_{e-2}, \dots, k_{1})$.
\end{theorem}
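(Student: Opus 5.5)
Since $G$ may be taken in standard form (Theorem 3.3 of \cite{Norton2000Hamming} as quoted above), I will permute columns so that $G$ has the displayed block shape and write a codeword as a combination of the rows of $G$. The plan has three parts: show that the rows of $H$ are orthogonal to the rows of $G$, i.e.\ $GH^\top = 0$; show that $H$ cuts out exactly $C$ by counting; and read off the subtype of $C^\perp$ from the shape of $H$.

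\textbf{Orthogonality.} Order the column blocks of $G$ with sizes $k_0, k_1, \dots, k_{e-1}, n-K$, indexed $0,\dots,e$. Then $H$ has rows indexed by $i=0,\dots,e-1$ with sizes $n-K, k_{e-1},\dots,k_1$. Row block $i$ of $H$ carries a factor $p^i$ and has blocks $p^iB_{i,j}$ in column block $e-j$, with $B_{i,i} := \Id$ in the appropriate position.

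Row block $r$ of $G$ has the factor $p^r$ and blocks $A_{r,s}$ in column block $s$, with $A_{r,r}=\Id$. The $(r,i)$ block of $GH^\top$ is
\begin{equation*}
p^{r+i}\sum_{s} A_{r,s}B_{i,e-s}^\top .
\end{equation*}
If $r+i\ge e$ this vanishes automatically. If $r+i<e$, put $j=e-r$; the sum runs over $i\le e-s\le j$, and it vanishes exactly by the recursive definition of $B_{i,j}$ (after transposing). I expect the main work to be this index bookkeeping, in particular matching $B_{i,j}$ with $A_{e-j,e-k}^\top$ and handling the boundary terms $k=i$ and $k=j$, which come from the identity blocks. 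The result is $C\subseteq\ker H$.

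\textbf{Equality.} Next I compute $|\ker H|$, or equivalently $|\text{rowspan}(H)|$ via $|C^\perp|\cdot|C|=p^{en}$; the latter is standard for Frobenius rings. The matrix $H$ is itself in standard form: up to column permutation it is upper-triangular with blocks $p^i\Id$, which gives the subtype $(n-K, k_{e-1},\dots,k_1)$. By Theorem 3.3 its row span then has size $p^{k'}$ with
\begin{equation*}
k' = e(n-K) + \sum_{i=1}^{e-1} k_{e-i}(e-i) = \sum_{t} k_t\,t + e(n-K).
\end{equation*}
On the other hand, $\log_p|C| = \sum_t k_t(e-t)$. Adding the two gives $e(n-K) + eK = en$. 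Since $\text{rowspan}(H)\subseteq C^\perp$ by the orthogonality step and the sizes match, $\text{rowspan}(H)=C^\perp$. Then $\ker H=(C^\perp)^\perp=C$ by double duality over $\Z_{p^e}$, so $H$ is a parity-check matrix.

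\textbf{Rank and subtype.} The subtype of $C^\perp$ is read from the standard form of $H$ as above. Hence
\begin{equation*}
rk(C^\perp) = (n-K)+k_1+\dots+k_{e-1} = n-k_0 .
\end{equation*}
Uniqueness of the subtype, as stated in Definition~\ref{def: standard form}, ensures this is well defined.
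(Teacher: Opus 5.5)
The paper gives no proof of this statement; it is quoted from \cite[Theorem 3.5]{Norton2000Hamming}, so there is no internal argument to compare with. Your proof is correct and is the standard verification. The index bookkeeping you flagged does close: with $j=e-r$ and $k=e-s$, the $(r,i)$ block of $GH^\top$ for $r+i<e$ is $p^{r+i}\bigl(B_{i,j}^\top+\sum_{k=i+1}^{j-1}A_{e-j,e-k}B_{i,k}^\top+A_{e-j,e-i}\bigr)$. This is exactly the transpose of the defining recursion, so it vanishes. For the counting step, the size of $\mathrm{rowspan}(H)$ also follows directly from the triangular pivot structure: the kernel of $x\mapsto\sum_i x_iH_i$ is $\{x : p^ix_i=0 \text{ for all } i\}$. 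Combined with $|C|\,|C^\perp|=p^{en}$ and double duality over $\Z_{p^e}$, this gives $\mathrm{rowspan}(H)=C^\perp$, hence the stated subtype and $rk(C^\perp)=n-k_0$.
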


\begin{definition}\cite{Norton2000cyclic}
    A code $C$ is called \textit{non-free} if $k_0(C) < k(C)$ and we denote by \textit{nf}$(C)$ the non-free part of $C$. That is, if $G$ is a generator matrix of $C$ in standard form, then nf$(C)$ consists of all codewords that lie in the span of the non-free part of $G$.
\end{definition}

\subsection{Codes over $\Z_m$}
\label{sec: codes Zm}
In this section, we now denote by $C$ a linear code of length $n$ over $\Z_m$; that is, $C$ is a $\Z_m$-submodule of $(\Z_m)^n$. Most of the previous definitions can be transferred from $\Z_{p^e}$ using the Chinese Remainder Theorem. 

\begin{definition}\cite[Corollary 4.10]{Park2009}
\label{def: nf(C) over m}
    A code $C$ is called \textit{free} if $\varphi_i(C)$ is free of the same rank for all $1 \leq i \leq \ell$. Consequently, a codeword $c \in C$ is in \textit{nf}$(C)$ if $\varphi_i(c) \in$ nf$(\varphi_i(C))$ for at least one $1 \leq i \leq \ell$.
\end{definition}

\begin{remark}
\label{rk: wrong def. over m}
    The authors of \cite{Bodur} defined the non-free part of a code $C$ over $\Z_m$ in a different way. According to their definition, a codeword $c$ is in $nf(C)$ if $\varphi_i(c) \in nf(\varphi_i(C))$ holds \emph{for all $1 \leq i \leq \ell$}. Therefore, if $c \in nf(C)$, this implies   that it is a multiple of all primes $p_i$, which is not the same as in the standard definition.\\
    When using their definition of non-free part, we will from now on refer to it as the \textit{alternative definition} of the non-free part. We will use this alternative definition of non-free part in Section \ref{sec: attack} when we present our attack. However, we will see that the attack works similarly when considering the standard definition.
\end{remark}

\begin{remark}
     In contrast to $\Z_{p_i^{e_i}}$, matrices over $\Z_m$ can not be transformed into standard forms. However, we can define an adapted version using the Chinese Remainder Theorem as done in \cite{Park2009}.
    Moreover, if $G$ is a generator matrix of $C$, we can still compute a parity-check matrix of $C$. First compute parity-check matrices $H_i$ for $\varphi_i(G)$ for all $1 \leq i \leq \ell$ and then define a parity-check matrix for $C$ as $H \coloneqq \Phi^{-1}(H_1, H_2, \dots, H_\ell)$.
\end{remark}

\subsection{Codes over $\Z_{p_i^{e_i}}[x] /\langle x^n-1\rangle$}
Instead of defining codes as submodules of $(\Z_{p_i^{e_i}})^n$, we can equivalently see them as ideals of the ring $R_i \coloneqq \Z_{p_i^{e_i}}/\langle x^n-1 \rangle$. A generating set $S$ of such a code $C$ is a set of codewords such that every element of $C$ can be expressed as a linear combination of the elements of $S$. Equivalently to generator matrices, there also exists a standard form for generating sets, as was done in \cite{Norton2000cyclic}.

\begin{definition} 
    A set $S \coloneqq \{p_i^{a_0}g_{a_0}(x), p_i^{a_1}g_{a_1}(x), \dots, p_i^{a_s}g_{a_s}(x)\} \subseteq R_i$ is a \textit{generating set in standard form} for the cyclic code $C = \langle S \rangle$ if $0 \leq s < e$ and
    \begin{itemize}
        \item[i)] $0 \leq a_0 < a_1 < \dots < a_s < e$,
        \item[ii)] $g_{a_j}(x) \in R_i$ is monic for all $0 \leq j \leq s$,
        \item[iii)] $\deg(g_{a_j}(x)) > \deg(g_{a_{j+1}}(x))$ for all $0 \leq j < s$,
        \item[iv)] $g_{a_s}(x) \mid g_{a_{s-1}}(x) \mid \dots \mid g_{a_0}(x) \mid (x^n-1)$. 
    \end{itemize}
\end{definition}

\begin{remark} \cite{Norton2000cyclic}
    Let $C = \langle S \rangle$ where $S = \{p_i^{a_0}g_{a_0}(x), p_i^{a_1}g_{a_1}(x), \dots, p_i^{a_s}g_{a_s}(x)\}$ is a generating set in standard form. Then, we can construct a generator polynomial of $C$, i.e., 
    \begin{equation*}
        C = \langle \sum_{j = 0}^sp_i^{a_j}g_{a_j}(x) \rangle.
    \end{equation*}
\end{remark}

Similar to before, we can define the non-free part of such a code $C$.

\begin{definition} \cite{Norton2000cyclic}
    Let $C = \langle S \rangle$ be a cyclic code in $R_i$ where \linebreak[4] $S = \{p_i^{a_0}g_{a_0}(x), p_i^{a_1}g_{a_1}(x), \dots, p_i^{a_s}g_{a_s}(x)\}$ is a generating set in standard form. If $s = 0$, $a_0 = 0$ and $g_0(x)$ is monic, then $C$ is called \textit{free}. If $s > 0$, then
    \begin{equation*}
        nf(C) = \langle\{p_i^{a_j}g_{a_j}(x) \mid a_j > 0, 0 \leq j \leq s\}\rangle.
    \end{equation*}
\end{definition}

\begin{remark}
\label{rk: wrong def over poly}
    Similar to Definition \ref{def: nf(C) over m}, if $C$ is a cyclic code in $R = \Z_m/\langle x^n-1 \rangle$ and $c \in C$, then $c \in \text{nf}(C)$ if it is in the non-free part of $C$ over $R_i$ for at least one $1 \leq i \leq \ell$.\\
    As already mentioned in Section \ref{sec: codes Zm}, the authors of \cite{Bodur} define this differently, i.e., $c \in \text{nf}(C)$ if it is in the non-free part of $C$ over $R_i$ for all $1 \leq i \leq \ell$. 
\end{remark}

The last definition we need is the one of Hensel lifts.

\begin{definition} \cite{Norton2000cyclic}
    Let $1 \leq i \leq \ell$ and let $f(x) \in \Z_{p_i^{e_i}}[x]$ be a polynomial such that $f(x) \mid (x^n-1)$. Then, there exists a unique polynomial $g(x) \in R_i$ such that $g(x) \equiv f(x) \pmod {p_i}$ and $g(x) \mid (x^n-1)$. This polynomial $g(x)$ is called the \textit{Hensel lift of $f(x)$}.
\end{definition}

\begin{definition} \cite[Definition 4.10]{Norton2000cyclic}
    Let $f(x) \in \Z_{p_i^{e_i}}[x]$ be a monic polynomial such that $f(x) \mid (x^n-1)$ and let $g(x)$ be the Hensel lift of $f(x)$. Then, the cyclic code $\langle g(x) \rangle$ is called the \textit{Hensel lift of the code $\langle f(x) \rangle$}.
\end{definition}

There is a connection between Hensel lifts of cyclic codes and free codes. To construct a non-free code $C \subseteq R_i$, $C$ must not be a Hensel lift.

\begin{proposition} \textnormal{\cite[Proposition 4.11]{Norton2000cyclic}}
    Let $C$ be a cyclic code over $R_i$. The following properties are equivalent.
    \begin{itemize}
        \item[(i)] $C$ is the Hensel lift of a cyclic code.
        \item[(ii)] $C$ is cyclic and free.
        \item[(iii)] There is a polynomial $g(x) \in R_i$ such that $C = \langle g(x) \rangle$ and $g(x) \mid (x^n-1)$.
    \end{itemize}
\end{proposition}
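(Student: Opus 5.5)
We prove the cycle of implications (i) $\Rightarrow$ (iii) $\Rightarrow$ (ii) $\Rightarrow$ (i). Throughout we use that $\gcd(p_i,n)=1$, so $x^n-1$ is separable modulo $p_i$. Hence it factors uniquely as a product of pairwise coprime monic basic irreducibles over $\Z_{p_i^{e_i}}$ (Hensel's lemma). Every monic divisor of $x^n-1$ is a product of a subset of these factors.

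\textbf{(i) $\Rightarrow$ (iii).} If $C=\langle g(x)\rangle$ with $g$ the Hensel lift of some $f$, then by the definition of the Hensel lift $g(x)\mid (x^n-1)$. So (iii) holds with this $g$.

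\textbf{(iii) $\Rightarrow$ (ii).} $C$ is an ideal of $R_i$, hence cyclic, so only freeness needs proof. Write $x^n-1=g(x)h(x)$.
\begin{itemize}
\item First, $g$ may be taken monic. Its leading coefficient divides the unit leading coefficient of $x^n-1$, so it is a unit; rescaling by it changes neither the ideal nor divisibility.
\item Next, the map $a(x)\mapsto a(x)g(x)$ from polynomials of degree $<n-\deg g$ to $C$ is injective, because $g$ is monic and $a g$ has degree $<n$, so no reduction modulo $x^n-1$ occurs.
\item The map is surjective: for $c=a g \bmod (x^n-1)$, divide $a$ by the monic $h$ to get $a=qh+r$ with $\deg r<\deg h$, so $ag\equiv rg$.
\end{itemize}
Thus $\{g, xg,\dots,x^{n-\deg g-1}g\}$ is a $\Z_{p_i^{e_i}}$-basis of $C$. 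After row reduction this gives a generator matrix with an identity block in standard form, so the subtype is $(n-\deg g,0,\dots,0)$ and $C$ is free. Equivalently, in the ideal language, $\{g\}$ is a generating set in standard form with $s=0$ and $a_0=0$.

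\textbf{(ii) $\Rightarrow$ (i).} This is the main step. By Norton--S\u{a}l\u{a}gean, $C$ has a generating set in standard form $\{p_i^{a_0}g_{a_0},\dots,p_i^{a_s}g_{a_s}\}$ whose monic $g_{a_j}$ divide $x^n-1$ and form a chain. Freeness of $C$ as a module forces $s=0$ and $a_0=0$. To see this:
\begin{itemize}
\item With $g_{a_s}\mid g_{a_{s-1}}\mid\dots\mid g_{a_0}$ and $d_j=\deg g_{a_j}$, the code has $|C|=p_i^{\,\sum_j (e_i-a_j)(d_{j-1}-d_j)}$, where $d_{-1}:=n$.
\item Its subtype then has nonzero entries $k_{a_j}=d_{j-1}-d_j$.
\item Free means $k_t=0$ for all $t>0$. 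Since $d_{j-1}>d_j$ by condition (iii) of the standard-form definition, this forces the single term $s=0$, $a_0=0$.
\end{itemize}
Hence $C=\langle g_0\rangle$ with $g_0$ monic and $g_0\mid x^n-1$.

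Set $f:=g_0 \bmod p_i$. Then $f$ divides $x^n-1$ over $\Z_{p_i}$, and $g_0$ is a monic divisor of $x^n-1$ reducing to $f$. By the uniqueness in the Hensel lift definition, $g_0$ is the Hensel lift of $f$, so $C$ is the Hensel lift of the code $\langle f\rangle$.

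\textbf{Main obstacle.} The delicate point is the cardinality/subtype computation in (ii) $\Rightarrow$ (i), i.e. showing that a free cyclic code cannot have a nontrivial $p_i$-adic generating set. I would justify the cardinality formula by the basis argument used in (iii) $\Rightarrow$ (ii), applied to each layer $p_i^{a_j}g_{a_j}$ modulo the previous layers, using coprimality of the factors.
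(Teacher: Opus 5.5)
The paper gives no proof of this proposition; it only quotes it from Norton--S\u{a}l\u{a}gean, so your argument has to stand on its own. The cycle (i)$\Rightarrow$(iii)$\Rightarrow$(ii)$\Rightarrow$(i) is the natural route. Your (i)$\Rightarrow$(iii) and (ii)$\Rightarrow$(i) are fine, with one proviso for (ii)$\Rightarrow$(i): the layered argument you sketch must produce the standard-form generator matrix itself, with rows $x^k p_i^{a_j} g_{a_j}$ for $0\le k<d_{j-1}-d_j$. The cardinality $|C|$ alone does not determine the subtype.

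\textbf{The gap: making $g$ monic.} The step that fails as written is the first bullet of (iii)$\Rightarrow$(ii). In $\Z_{p_i^{e_i}}[x]$ leading coefficients can multiply to zero. So $gh=x^n-1$ does not force $\mathrm{lc}(g)$ to be a unit, and $g$ need not be a constant multiple of a monic polynomial. For instance, over $\Z_4$ with $n=3$, take $g=(1+2x)(x-1)=2x^2+3x+3$ and $h=(1+2x)(x^2+x+1)=2x^3+3x^2+3x+1$. Since $(1+2x)^2=1$, they satisfy $gh=x^3-1$, yet $\mathrm{lc}(g)=2$. The step is load-bearing: for this $g$ your basis would have $n-\deg g=1$ element, while $\langle g\rangle=\langle x-1\rangle$ is free of rank $2$.

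\textbf{How to repair it.} Use the factorization you already set up.
\begin{enumerate}
\item Reducing modulo $p_i$ gives $\bar g\bar h=x^n-1$ in $\Z_{p_i}[x]$. Let $g^*$ be the product of those basic irreducible factors $f_j$ of $x^n-1$ with $\bar f_j\mid\bar g$.
\item By the Chinese Remainder Theorem, $R_i\cong\prod_j \Z_{p_i^{e_i}}[x]/\langle f_j\rangle$. This is a product of Galois rings, each local with maximal ideal generated by $p_i$.
\item If $\bar f_j\nmid\bar g$, the image of $g$ in the $j$-th factor has nonzero residue and is therefore a unit.
\item If $\bar f_j\mid\bar g$, then $\bar f_j\nmid\bar h$ by separability. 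So $h$ is a unit in that factor, and $gh=0$ forces $g=0$ there.
\item The monic polynomial $g^*$ has exactly the same pattern of units and zeros. Hence $\langle g\rangle=\langle g^*\rangle$ with $g^*$ monic and $g^*\mid x^n-1$, and your injectivity/surjectivity argument goes through for $g^*$.
\end{enumerate}
Equivalently, you can use the standard fact that a polynomial over $\Z_{p_i^{e_i}}$ that is nonzero modulo $p_i$ equals a unit of $\Z_{p_i^{e_i}}[x]$ times a monic polynomial. What is false is only your claim that this unit can be chosen to be a constant.
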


\section{Description of the PIR Scheme}
\label{sec: Description}
In a single-server PIR scheme, there is one server storing all files and a user wants to retrieve a certain file without revealing the requested file index. As a first step, the user constructs a query based on the desired file and sends it to the server. The server then sends back a response which allows the user to recover the desired file.
A particular PIR protocol based on linear codes over rings was presented in \cite{Bodur}.
We start by describing this scheme, before analyzing its completeness and security.\\

Let $m' = \prod_{i = 1}^\ell p_i$ and assume that the data alphabet in the server is $\Z_{m'}$.
The server contains $t$ files, stored as matrices in $\Z_{m'}^{L \times r}$. We denote these files as $DB^i$ for $1 \leq i \leq t$ and the whole database is represented by the matrix
$$ DB \coloneqq [DB^1 | \dots | DB^t] \in\mathbb{Z}_{m'}^{L\times rt}.$$
Further, let $d \in \{1, \dots, t\}$ be the index of the desired file.

\paragraph{Query Generation}
\label{sec: Query gen}
First, the user chooses two codes called $C_{\text{IN}}$ and $C_{\text{OUT}}$. The inner code $C_{\text{IN}}$ is a non-free cyclic code in $R$ of length $n$. The outer code $C_{\text{OUT}}$ is a matrix-product code in $R^s$ for some integer $s \geq r$. According to Definition \ref{def: matrix-prod code}, we have that $C_{\text{OUT}} \coloneqq [\Tilde{C}_1, \Tilde{C}_2, \dots, \Tilde{C}_s]M$ where each $\Tilde{C}_i$ is a cyclic code in $R$ and $M$ is an invertible $s \times s$ matrix. Further, let $G_{\text{OUT}} \in R^{s \times s}$ be a generator matrix of $C_{\text{OUT}}$.\\
These codes also need to fulfil the following conditions.
\begin{enumerate}
    \item \label{item: cond 1} The codes $\Tilde{C}_i$ are nested, i.e., $\Tilde{C}_s \subseteq \Tilde{C}_{s-1} \subseteq \dots \subseteq \Tilde{C}_1$.
    \item \label{item: cond 2} $\Tilde{C}_s \cap C_{\text{IN}} \neq \{0\}$ and $\Tilde{C}_s \cap (C_{\text{IN}}^\perp \setminus C_{\text{IN}}) \neq \{0\}$.
    \item \label{item: cond 3} The projections of the codes $\Tilde{C}_1, \dots, \Tilde{C}_s$ over $\Z_{p_i^{e_i}}$ are not Hensel lifts for all $1 \leq i \leq \ell$.
\end{enumerate}

The user then constructs $t$ matrices $A^i \in m'R^{r \times s}$ whose entries are random elements in $m'R$. These matrices are encoded as 
\begin{equation}
\label{eq: def w^i}
    W^i = A^i \cdot G_{\text{OUT}},
\end{equation}
i.e., $W^i \in m'R^{r \times s}$ for all $1 \leq i \leq t$ and its rows are elements of $C_{\text{OUT}}$. Then, the user randomly chooses another $t$ matrices $E^i$ of size $r \times s$ such that each entry is in nf$(C_{\text{IN}})$.\\
Next, the user constructs $t$ matrices $U^i$ of size $r \times s$ such that $U^i = 0$ if and only if $i \neq d$. For $i = d$, the user randomly chooses a column position $\gamma \in \{1, \dots, s-r+1\}$ and 
\begin{equation*}
    U_{1+\lambda, \gamma + \lambda}^d \in \, \text{nf}(\Tilde{C}_s \cap (C_{\text{IN}}^\perp \setminus C_{\text{IN}})) \quad \text{for all} \hspace{1.5mm} \lambda \in \{0, 1, \dots, r-1\},
\end{equation*}
i.e., there are exactly $r$ nonzero entries in $U^d$. The user computes the $t$ matrices $\Delta^i := W^i + E^i + U^i$ and constructs the matrices $\Delta = W + E + U$ and $A$ of size $rt \times s$ given by
    \begin{equation*}
        \Delta := \begin{pmatrix}
            \Delta^1 \\
            \Delta^2 \\
            \vdots \\
            \Delta^d\\
            \vdots \\
            \Delta^t
        \end{pmatrix} = \begin{pmatrix}
            W^1 + E^1 + U^1 \\
            W^2 + E^2 + U^2 \\
            \vdots \\
            W^d + E^d + U^d \\
            \vdots \\
            W^t + E^t + U^t
        \end{pmatrix} = \begin{pmatrix}
            W^1 + E^1 \\
            W^2 + E^2 \\
            \vdots \\
            W^d + E^d + U^d \\
            \vdots \\
            W^t + E^t
        \end{pmatrix}, \quad A := \begin{pmatrix}
            A^1 \\
            A^2 \\
            \vdots \\
            A^d\\
            \vdots \\
            A^t
        \end{pmatrix}.
    \end{equation*}

Note that the entries of $\Delta$ and $A$ are both in $m'R$ when using the definition of non-free part as in \cite{Bodur} (as described in Remark \ref{rk: wrong def over poly}).\\
As a last step, the query matrix $Q$ with entries in $\Z_m$ is generated by concatenating the matrices  $\Delta$ and $A$ and writing their entries in $R$ as the corresponding vectors in $\Z_m$, i.e., $Q := [A | \Delta]$. Therefore, $Q$ is an $rt \times 2ns$ matrix with entries in $\Z_m$.\\

\paragraph{Server Response}
The server responds by returning the product
\begin{equation*}
    S:= DB \cdot Q \in \Z_m^{L \times 2ns}.
\end{equation*}
Note that $S = DB \cdot [A | \Delta] = [DB \cdot A |DB \cdot \Delta] =: [S_1 | S_2]$ where $S_1, S_2 \in \Z_m^{L \times sn}$.\\

\paragraph{Recovering Stage}
\label{sec: scheme recover}
Knowing $n$, the user can transform the matrices $S_1, S_2$ into matrices over $R$. In this case, we write $S_1^R, S_2^R$ to emphasize that the entries of these matrices are elements in $R$. It holds that
\begin{equation*}
    S_1^R = \sum_{i = 1}^t DB^i \cdot A^i, \quad S_2^R = \sum_{i = 1}^t DB^i \cdot (W^i + E^i + U^i),
\end{equation*}
and $S_1^R, S_2^R \in R^{L \times s}$.\\
To recover the desired file, the user first computes
 \begin{align*}
        S_2^R - S_1^R \cdot G_{\textnormal{OUT}} &= \sum_{i = 1}^t(DB^i \cdot (W^i + E^i + U^i) - DB^i \cdot A^i \cdot G_{\textnormal{OUT}})\\
        & \hspace{-3mm} = \sum_{i = 1}^t (DB^i \cdot E^i) + (DB^i \cdot U^i),
\end{align*}
where we have used Equation \eqref{eq: def w^i} in the last equality.
    We denote by $\Gamma_{\text{IN}}$ the matrix-product code
    $$\Gamma_{\text{IN}} \coloneqq \overbrace{[C_{\textnormal{IN}}, C_{\textnormal{IN}}, \dots, C_{\textnormal{IN}}]}^{s\textnormal{-times}}\textnormal{Id}_s.$$
    We can view $\Gamma_{\text{IN}}$ as a code over $\Z_m$ and denote by $H_{\Gamma}$ a parity-check matrix of this code over $\Z_m$. Note that $H_{\Gamma}$ has dimension $sa \times sn$ for some known $a \in \N$. We then expand the entries of $S_2^R - S_1^R \cdot G_{\textnormal{OUT}}$ in $\Z_m$ and denote this by $[\, \boldsymbol{\cdot} \,]_m$ to indicate where the entries live in. The user computes
    \begin{align*}
        &[S_2^R - S_1^R \cdot G_{\textnormal{OUT}}]_m \cdot H_\Gamma^\top =  \sum_{i = 1}^t [DB^i \cdot E^i + DB^i \cdot U^i]_m \cdot H_\Gamma^\top\\
        &= \sum_{i = 1}^t \underbrace{[DB^i \cdot E^i]_m \cdot H_\Gamma^\top}_{= \, 0 \, \textnormal{as} \, E_{kj}^i \, \in \, C_{\textnormal{IN}}} + \, \underbrace{[DB^i \cdot U^i]_m \cdot H_\Gamma^\top}_{= \, 0 \, \textnormal{if} \, i \neq d} = [DB^d \cdot U^d]_m \cdot H_\Gamma^\top =: B.
    \end{align*}
    Note that $B$ is therefore an $L \times sa$ matrix with entries in $\Z_m$. We denote the entries of the desired file $DB^d$ as $x_{ij} \in \Z_{m'}$ for $1 \leq i \leq L$ and $1 \leq j \leq r$. To retrieve the desired file, the user has to solve the equation
    \begin{equation}
    \label{eq: find file}
        B = \begin{pmatrix}
            x_{11} & \dots & x_{1r}\\
            \vdots & & \vdots \\
            x_{L1} & \dots & x_{Lr}
        \end{pmatrix} \cdot [U^d]_m \cdot H_\Gamma^\top.
    \end{equation}
    By definition of $\Gamma_{\text{IN}}$, it holds that
    \begin{equation*}
        H_\Gamma = \begin{pmatrix}
            H_{\textnormal{IN}} & 0 & 0 & \dots & 0 \\
            0 & H_{\textnormal{IN}} & 0 & \dots & 0 \\
            \vdots & & \ddots & & \vdots \\
            0 & 0 & 0 & \dots & H_{\textnormal{IN}}
        \end{pmatrix},
    \end{equation*}
    where $H_{\textnormal{IN}}$ is a parity-check matrix in $\Z_m^{a \times n}$ of $[C_{\textnormal{IN}}]_m$. By writing  $B = [b^1 | b^2 | \dots | b^s]$ where $b^i \in \Z_m^{L \times a }$ for all $1 \leq i \leq s$ and by definition of $u^d$, we can write Equation \eqref{eq: find file} as $r$ systems of $La$ linear equations of the form
    \begin{equation}
    \label{eq: final eq.}
        b^{\gamma + i} = 
        \begin{pmatrix}
            x_{1,1+i} \\
            \vdots \\
            x_{L,1+i}
        \end{pmatrix} \cdot [U_{1+i,\gamma + i}^d]_m \cdot H_{\textnormal{IN}}^\top
    \end{equation}
    for $i \in \{0, \dots, r-1\}$. Thus, the user finally has to solve these $r$ systems of linear equations to recover the desired file $DB^d$. However, as we show in Section \ref{sec: repair}, the solutions to these equations are not always unique such that it is not always possible to recover the desired file.

\section{Example where Unique Recovery fails and Additional Condition}
\label{sec: repair}
The PIR scheme introduced in \cite{Bodur} with all the conditions stated in Section \ref{sec: Description} does not always work. More precisely, the desired file is a possible outcome of the recovering stage but it is not necessarily unique. We demonstrate this by a small example and afterwards present a sufficient condition to achieve unique solutions.

\subsection{Example}
\paragraph{Set up}
We choose the following parameters for our example,
\begin{itemize}
    \item $m \coloneqq  2^2 \cdot 3^2 = 36$ and therefore $m' \coloneqq 6$ and $\ell \coloneqq 2$,
    \item $n \coloneqq 5$,
    \item $L \coloneqq t \coloneqq s \coloneqq 3$,
    \item $d \coloneqq r \coloneqq 2$.
\end{itemize}
Consequently, the ring $R$ is $R \coloneqq \Z_{36}/\langle x^5-1\rangle$. For the database, we choose the matrix
\begin{align*}
    DB := \left( \begin{array}{cc|cc|cc}
        4 & 1 \, & \, 0 & 2 \, & \, 5 & 1\\
        1 & 3 \, & \, 4 & 1 \, & \, 3 & 0\\
        0 & 5 \, & \, 0 & 1 \, & \, 2 & 3
    \end{array}\right) = [DB^1 | DB^2 | DB^3]  
\end{align*}
and we want to retrieve the file $DB^2$.

\paragraph{Query Generation}
We choose the inner code $C_{\textnormal{IN}}$ to be the cyclic code generated by the polynomial $g_{\textnormal{IN}}(x):=28x+20 \in R$. We expand $C_{\text{IN}}$ over $\Z_{36}$ which allows us to construct a generator and a parity-check matrix, namely

\begin{equation*}
    G_{\textnormal{IN}} \coloneqq \begin{pmatrix}
        28 & 0 & 0 & 0 & 20\\
        0 & 28 & 0 & 0 & 32\\
        0 & 0 & 28 & 0 & 8\\
        0 & 0 & 0 & 28 & 20 \\
        0 & 0 & 0 & 0 & 24
    \end{pmatrix}, \quad
    H_{\textnormal{IN}} \coloneqq \begin{pmatrix}
        21 & 12 & 12 & 12 & 12\\
        0 & 9 & 0 & 0 & 0\\
        0 & 0 & 9 & 0 & 0\\
        0 & 0 & 0 & 9 & 0 \\
        0 & 0 & 0 & 0 & 9
    \end{pmatrix}.
\end{equation*}
In fact, over $\Z_4$, $C_{\text{IN}} = \{0\}$ and hence $H_{\text{IN}} = \Id_5$. Over $\Z_9$, we obtain that $H_{\text{IN}} = \begin{pmatrix}
    3 & 3 & 3 & 3 & 3
\end{pmatrix}$. To construct the outer code $C_{\text{OUT}}$, we choose $\Tilde{C}_1 \coloneqq \langle 9x+1 \rangle$, $\Tilde{C}_2 \coloneqq \langle x+17 \rangle$ and $\Tilde{C}_3 \coloneqq \langle 21x+33 \rangle$. We also choose a matrix $M \in \Z_{36}^{3 \times 3}$ and define $C_{\text{OUT}}$ as the matrix-product code in $R^3$
\begin{equation*}
    C_{\textnormal{OUT}} := [\Tilde{C}_1, \Tilde{C}_2, \Tilde{C}_3] \begin{pmatrix}
        1 & 0 & 2\\
        0 & 1 & 0\\
        1 & 1 & 0
    \end{pmatrix}.
\end{equation*}

\begin{remark}
It can be checked that Conditions \eqref{item: cond 1} and \eqref{item: cond 2} are satisfied. Note that Condition \eqref{item: cond 3} is not satisfied, i.e., not all projections of $\Tilde{C_j}, \,1 \leq j \leq 3$, over $\Z_{p_i^{e_i}}$ are non-Hensel lifts. However, this requirement is only needed for security reasons, as explained in \cite[Remark 6]{Bodur}, and has no effect on the recovering stage.\\
\end{remark}

The entries of $A$ are polynomials in $m'R$ and we choose
\begin{align*}
    &A^1 \coloneqq \begin{pmatrix}
        6x^3 & 6 & 12x^2\\
        12x+18 & 24x^2+6 & 6x^4
    \end{pmatrix}, \quad A^2 \coloneqq \begin{pmatrix}
        18x^2+6x & 24x & 6x^2+6x \\
        30 & 12x+6 & 12x
    \end{pmatrix},\\
    &A^3 \coloneqq \begin{pmatrix}
        0 & 18x^2 & 12 \\
        6x & 6x^3+6x & 6x+12
    \end{pmatrix}.
\end{align*}

Recall that the elements of $E$ are in the non-free part of $C_{\textnormal{IN}}$, so we choose
\begin{align*}
     &E^1 \coloneqq \begin{pmatrix}
         28x^3 + 20x^2 + 28x + 20 & 0 & 28x^2+20x\\
        4x^4 + 8x^3 + 28x^2 + 20x & 0 & 16x^3+32x^2
    \end{pmatrix},\\
    &E^2 \coloneqq \begin{pmatrix}
        28x^4+32 & 28x^3+20x^2 & 4x^2+16x+16 \\
        0 & 32x^4 + 28x^3 + 28x + 20 & 24x^2+12x
    \end{pmatrix},\\
    &E^3 \coloneqq \begin{pmatrix}
        32x^2 + 20x + 20 & 28x+20 & 24x^4+12 \\
        8x^2+28x+24 & 0 & 24x+12
    \end{pmatrix}.
\end{align*}

Finally, we set $U^1 \coloneqq U^3 \coloneqq 0$, choose $\gamma := 2$ and the entries of $U^2$ have to be in the non-free part of $\Tilde{C}_3 \cap (C_{\textnormal{IN}}^\perp \setminus C_{\textnormal{IN}})$. So, we set
\begin{equation*}
    U^2 \coloneqq \begin{pmatrix}
        0 & 9x+9 & 0\\
        0 & 0 & 27x+27
    \end{pmatrix}.
\end{equation*}
Then, the matrices $\Delta, Q$ and $S$ can be computed as explained in Section \ref{sec: Description}.

\paragraph{Recovering Stage}
We follow the steps explained in Section \ref{sec: scheme recover} until we obtain the equations of the form as in Equation \eqref{eq: final eq.}, which the user has to solve to retrieve the desired file. In this example, the user has to solve
\begin{equation*}
    \begin{pmatrix}
        0 & 0 & 0 & 0 & 0\\
        0 & 0 & 0 & 0 & 0\\
        0 & 0 & 0 & 0 & 0
    \end{pmatrix} = \begin{pmatrix}
        x_{11}\\
        x_{21}\\
        x_{31}
    \end{pmatrix} \cdot \begin{pmatrix}
        9 & 9 & 0 & 0 & 0
    \end{pmatrix}
\end{equation*}
and
\begin{equation*}
    \begin{pmatrix}
        18 & 18 & 0 & 0 & 0\\
        27 & 27 & 0 & 0 & 0\\
        27 & 27 & 0 & 0 & 0
    \end{pmatrix} = \begin{pmatrix}
        x_{12}\\
        x_{22}\\
        x_{32}
    \end{pmatrix} \cdot \begin{pmatrix}
        27 & 27 & 0 & 0 & 0
    \end{pmatrix}.
\end{equation*}
We see that the solutions $x_{ij}$ of these equations are not unique. Indeed, possible solutions for $x_{11}$ are $0$ and $4$ whereas $x_{22}$ can be either equal to $1$ or $5$.\\
Consequently, it is not possible for the user to retrieve the desired file.

\subsection{Repair}
The equations, the user has to solve in the end of the recovering stage, are of the form
 \begin{equation}
 \label{eq: last eq. 2}
    b^{\gamma + i} = 
    \begin{pmatrix}
        x_{1,1+i} \\
        \vdots \\
        x_{L,1+i}
    \end{pmatrix} \cdot [U_{1+i,\gamma + i}^d]_m \cdot H_{\textnormal{IN}}^\top,
\end{equation}
for all $0 \leq i \leq r-1$. Let $z_i := [U^d_{1+i, \gamma + i}]_m \cdot H_{\text{IN}}^\top$. By introducing an additional constraint on $z_i$, the correctness of the scheme can be ensured.

\begin{lemma}
    Let $\lambda_i := \min\{\Tilde{\lambda} \in \Z_m\setminus\{0\}\mid \Tilde{\lambda} \cdot z_i \equiv 0 \pmod m\}$. If $\lambda_i \geq m'$, the solution to Equation \eqref{eq: last eq. 2} is unique. Consequently, if $\lambda_i \geq m'$ for all $0 \leq i \leq r-1$, the desired file can be recovered uniquely.
\end{lemma}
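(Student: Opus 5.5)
The plan is to reduce uniqueness to an annihilator argument for the row vector $z_i \in \Z_m^{a}$. Suppose two vectors $x=(x_{1,1+i},\dots,x_{L,1+i})^\top$ and $x'=(x'_{1,1+i},\dots,x'_{L,1+i})^\top$ with entries in $\Z_{m'}$ both satisfy Equation \eqref{eq: last eq. 2}. Here the entries of $\Z_{m'}$ are viewed as their representatives in $\{0,\dots,m'-1\}\subset \Z_m$, which is how the file entries enter the computation over $\Z_m$. Subtracting the two equations gives, row by row, $(x_{j,1+i}-x'_{j,1+i})\, z_i \equiv 0 \pmod m$ for every $1\le j\le L$. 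So each difference $\delta_j := x_{j,1+i}-x'_{j,1+i}$ lies in the annihilator $\mathrm{Ann}(z_i)=\{\mu\in\Z_m \mid \mu z_i\equiv 0\}$.

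The next step is to describe this annihilator. It is an ideal of $\Z_m$, hence of the form $g\Z_m$ with $g\mid m$. The least nonzero element of $\{0,\dots,m-1\}$ lying in it is exactly $g$, so $\lambda_i=g$, and every element of $\mathrm{Ann}(z_i)$ is a multiple of $\lambda_i$. Here I would note that $\lambda_i$ is well defined, since $m\cdot z_i=0$ means the set contains the class represented by $m$; equivalently, if $z_i$ has trivial annihilator we may regard $\lambda_i=m\ge m'$. Either way the description of the ideal as $\lambda_i\Z_m$ holds.

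Now compare the range of $\delta_j$ with the multiples of $\lambda_i$. Since $0\le x_{j,1+i},x'_{j,1+i}\le m'-1$, the integer difference satisfies $|\delta_j|\le m'-1<\lambda_i$. A nonzero residue in $\lambda_i\Z_m$ has a representative in $\{\lambda_i,2\lambda_i,\dots,m-\lambda_i\}$. Its integer lifts congruent to $\delta_j$ with absolute value less than $m$ are therefore $\pm$ values of size at least $\lambda_i$, because $m-k\lambda_i\ge\lambda_i$ as $\lambda_i\mid m$. This contradicts $|\delta_j|<\lambda_i$, so $\delta_j=0$ and the solution of \eqref{eq: last eq. 2} is unique. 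Applying this for each $0\le i\le r-1$ recovers every column of $DB^d$, which gives the final claim.

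The main obstacle is the careful handling of the identification of $\Z_{m'}$ inside $\Z_m$. The entries $x_{j,1+i}$ are not elements of $\Z_m$ a priori, so uniqueness must be understood among lifts in $\{0,\dots,m'-1\}$. The bound $|\delta_j|<m'\le\lambda_i$ has to be combined correctly with the fact that the annihilator is the principal ideal generated by the divisor $\lambda_i$ of $m$. I would state this identification explicitly at the start of the proof.
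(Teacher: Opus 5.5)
Your proof is correct. It has the same skeleton as the paper's: the difference of two solutions annihilates $z_i$, and this is incompatible with its size. The key step, however, is handled differently.

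The paper writes $y_{j,1+i}-v_{j,1+i}=\alpha\mu$ with $\alpha$ a unit and $\mu$ a zero divisor. It deduces $\mu z_i\equiv 0$, hence $\mu\ge\lambda_i\ge m'$, and then concludes $y_{j,1+i}=v_{j,1+i}+\alpha\mu\ge m'$. That last step tacitly treats $\alpha\mu$ as an integer at least $\mu$. It ignores the reduction modulo $m$; for instance, with $m=36$, $11\cdot 10\equiv 2$. It also ignores the case $y_{j,1+i}<v_{j,1+i}$, where the class of the difference is represented by $m-(v_{j,1+i}-y_{j,1+i})$.

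Your route avoids both issues. Identifying $\mathrm{Ann}(z_i)=\lambda_i\Z_m$ with $\lambda_i\mid m$ shows that every nonzero class in it has both integer lifts in $(-m,m)$ of absolute value at least $\lambda_i$. This handles the sign and the wraparound at once, so your version is a complete argument where the paper's is only a sketch.

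Two small remarks. First, your sentence claiming $\lambda_i$ is well defined because ``the set contains the class represented by $m$'' is wrong, since that class is $0$ and is excluded. The convention $\lambda_i=m$ for a trivial annihilator, which you give next, is the right fix, and in that case uniqueness is immediate. Second, the divisor structure is not actually needed. By symmetry you may assume $\delta_j>0$, so $\delta_j\in\{1,\dots,m'-1\}$ is itself a nonzero element of $\Z_m$ annihilating $z_i$. This directly contradicts the minimality of $\lambda_i\ge m'$.
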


\begin{proof}
    Let $0 \leq i\leq r-1$ and $\lambda_i \geq m'$. Assume by contradiction that there exist two solutions $v \neq y$ to Equation \eqref{eq: last eq. 2}, where $v,y \in \Z_{m'}^L$. Consequently, $b^{\gamma + i} = v \cdot z_i = y \cdot z_i$ and thus $(y-v) \cdot z_i \equiv 0 \pmod m$, respectively, $(y_{j, 1+i} - v_{j, 1+i}) \cdot z_i \equiv 0 \pmod m$ for all $1 \leq j \leq L$.\\
    As $v \neq y$, there exists a $1 \leq j \leq L$ such that $(y_{j, 1+i} - v_{j, 1+i}) \neq 0$. Moreover, as $U_{1+i, \gamma + i}^d$ is not in $C_{\text{IN}}$, it holds that $z_i \neq 0$. This means in particular that $(y_{j, 1+i} - v_{j, 1+i})$ is a multiple of a zero divisor.\\
    Hence, there exist $\alpha, \mu \in \Z_m$ such that $\gcd(\alpha,m) = 1$, $\gcd(\mu, m) > 1$ and
    \begin{equation*}
        y_{j, 1+i} - v_{j, 1+i} = \alpha \cdot \mu \Leftrightarrow y_{j, 1+i} = v_{j, 1+i} + \alpha \cdot \mu.
    \end{equation*}
    Consequently, as $(y_{j, 1+i} - v_{j, 1+i}) \cdot z_i \equiv 0 \pmod m$, it holds that
    \begin{equation*}
        (\alpha \cdot \mu) z_i \equiv 0 \equiv \alpha \cdot (\mu z_i)\pmod m
    \end{equation*}
    and hence $\mu \geq \lambda_i$ by definition of $\lambda_i$. So, we have that $\mu \geq \lambda_i \geq m' > v_{j, 1+i}$. Therefore,
    \begin{equation*}
        y_{j, 1+i} = v_{j, 1+i} + \alpha \cdot \mu \geq m'.
    \end{equation*}
    This is a contradiction as $y \in \Z_{m'}^L$. Thus, we conclude that $v = y$, i.e., the solution to Equation \eqref{eq: last eq. 2} is unique.
\end{proof}

\section{Attack}
\label{sec: attack}
The general idea of the attack is to first remove the matrix $W$ in $\Delta = W + E + U$ as rowspan$(W)$ and rowspan$(U)$ both lie in the code $C_{\text{OUT}}$. Since rowspan$([E]_ m) \subseteq \Gamma_{\text{IN}}$ and rowspan$([U]_ m) \not \subseteq \Gamma_{\text{IN}}$, we can then proceed similarly as in \cite{Bordage2021}.

\subsection{Description of the Attack}
Let $1 \leq i \leq \ell$ and consider the ring $\Z_{p_i^{e_i}}$. Throughout this section, all matrices and computations will be considered over $\Z_{p_i^{e_i}}$. Denote by $A[j]$ the submatrix of $A$ obtained by deleting the rows corresponding to $A^j$, i.e., the rows $[(j-1)r+1, jr]$, for $1 \leq j \leq t$. Similarly, we define $\Delta[j], E[j]$ and $U[j]$. Therefore, these matrices $A[j], \Delta[j], E[j]$ and $U[j]$ all have dimension $r(t-1) \times ns$.\\

We can consider $A[j]^\top$ as generator matrix of a linear code $C$ and compute a corresponding parity-check matrix $H[j]$ in standard form (according to Theorem \ref{thm: pc-matrix}). Note that $H[j]$ has dimension $r(t-1) \times r(t-1)$. Then, $H[j]A[j] = 0$ and hence,
\begin{equation*}
    Z[j]:= H[j] \cdot \Delta[j] = H[j] \cdot (A[j]G_{\text{OUT}} + E[j] + U[j]) = H[j] \cdot (E[j] + U[j])
\end{equation*}
is a matrix of dimension $r(t-1) \times ns$.\\
We want to compare these matrices $Z[j]$ for $j \in \{1, \dots, t\}$, so we first prove that if the number of files is large enough, then $Z[j] \neq 0$ with high probability.

\begin{lemma}
\label{lem: Zj not 0}
    Let $A$ be a random matrix in $p_i\Z_{p_i^{e_i}}^{rt \times ns}$ and let $A[j]$ and $H[j]$ be constructed as described above, where $1 \leq j \leq t$. If
    \begin{equation}
    \label{eq: bound for t 1}
        t > \frac{2ns}{r} + 1,
    \end{equation}
    then $Z[j] \neq 0$ with high probability.
\end{lemma}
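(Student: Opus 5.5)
The plan is to reduce $Z[j]\neq 0$ to a statement about column spaces over $\Z_{p_i^{e_i}}$ and then bound a probability by counting.

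\emph{Step 1 (reformulation).} Write $N:=r(t-1)$ and let $C\subseteq \Z_{p_i^{e_i}}^{N}$ be the code generated by the rows of $A[j]^\top$, i.e.\ by the $ns$ columns of $A[j]$. Since $H[j]$ is a parity-check matrix of $C$ (Theorem \ref{thm: pc-matrix}), a vector $v\in\Z_{p_i^{e_i}}^{N}$ satisfies $H[j]v=0$ if and only if $v\in C$. Hence $Z[j]=H[j]\,(E[j]+U[j])=0$ holds if and only if every column of $E[j]+U[j]$ lies in $C$. So it suffices to show that, with high probability, some column of $E[j]+U[j]$ lies outside $C$.

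\emph{Step 2 (size of $C$ and non-triviality of $H[j]$).} All entries of $A[j]$ lie in $p_i\Z_{p_i^{e_i}}$, and $C$ has at most $ns$ generators. Each such generator has additive order at most $p_i^{e_i-1}$, so
\[
|C|\le p_i^{(e_i-1)ns}.
\]
In particular $rk(C)\le ns<N$. By Theorem \ref{thm: pc-matrix}, the free part of $C^\perp$ then has size $N-rk(C)\ge N-ns>0$, so $H[j]$ contains free rows. Thus $H[j]\neq 0$ and $C$ is a proper, indeed very small, submodule of $(p_i\Z_{p_i^{e_i}})^{N}$, which has $p_i^{(e_i-1)N}$ elements.

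\emph{Step 3 (probability bound).} The matrices $E$ and $U$ are chosen independently of $A$. Condition on $E[j]+U[j]$, fix one nonzero column $v$ of it (nonzero with high probability since $E$ is random in $nf(C_{\text{IN}})$), and bound $\Pr_A[v\in C]$. I would use a union bound over coefficient vectors $\mu\in\Z_{p_i^{e_i}}^{ns}$:
\[
\Pr_A[v\in C]\le\sum_{\mu}\Pr_A\bigl[A[j]\mu=v\bigr].
\]
Write $A[j]=p_iA'$ with $A'$ uniform modulo $p_i^{e_i-1}$. If $\mu$ has $p_i$-adic valuation $c$, then $A[j]\mu$ is uniform on $p_i^{1+c}\Z_{p_i^{e_i}}^{N}$, so each term is at most $p_i^{-(e_i-1-c)N}$ and vanishes unless $v\in p_i^{1+c}\Z_{p_i^{e_i}}^N$. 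There are at most $p_i^{(e_i-c)ns}$ vectors $\mu$ of valuation $c$. Summing over $c$ gives a bound of the shape
\[
\sum_{c} p_i^{(e_i-c)ns-(e_i-1-c)N}.
\]
Using $N>2ns$, which is exactly \eqref{eq: bound for t 1}, each exponent is negative as long as $v$ has small valuation, e.g.\ $v\notin p_i^{2}\Z_{p_i^{e_i}}^N$ when $e_i\ge 3$. The bound is then exponentially small in $N-2ns$. The factor $2$ in \eqref{eq: bound for t 1} is precisely what absorbs the $p_i^{e_i\,ns}$ coefficient choices against the $p_i^{(e_i-1)N}$ targets in the worst case $e_i=2$.

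\emph{Main obstacle.} Step 3 is the delicate part: controlling the valuation of the columns of $E[j]+U[j]$. These entries lie in $nf(C_{\text{IN}})$ and can be divisible by higher powers of $p_i$, which weakens the bound. I would handle this by choosing the column $v$ of minimal valuation and redoing the count inside $p_i^{c_0}\Z_{p_i^{e_i}}^N$, where $c_0$ is that valuation. If needed, I would alternatively argue via Step 2 that $H[j]$ has a free row $h$. Then $hv$ is a nontrivial linear form in the entries of $v$, with a unit coefficient, so $hv\neq 0$ with high probability over the random choice of $E$.
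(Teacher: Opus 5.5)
Your proof is correct, and it takes a genuinely different route from the paper's. The paper applies the density theorem of Byrne et al.\ (Theorem~\ref{thm: density free modules}). Its hypothesis $R'=ns/(r(t-1))<\frac12$ is the source of the factor $2$ in \eqref{eq: bound for t 1}. From it the paper gets that $A[j]^\top$ generates a code of subtype $(0,ns,0,\dots,0)$ with high probability, and reads off from Theorem~\ref{thm: pc-matrix} that $H[j]$ has subtype $(r(t-1)-ns,0,\dots,0,ns)$. It then concludes $Z[j]\neq 0$ from $r(t-1)-ns>0$. That last inference only gives a necessary condition; the paper never argues that $H[j]$ fails to annihilate $E[j]+U[j]$.

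Your Step~2 obtains the free rows of $H[j]$ deterministically, with no density result. Your Steps~1 and~3 supply the missing step, using the independence of $A$ from $E,U$ and a union bound to bound $\Pr_A[Z[j]=0]$ explicitly. You also make explicit the hypothesis that $E[j]+U[j]$ has a nonzero column over $\Z_{p_i^{e_i}}$. The paper leaves this implicit, and it can fail: in the paper's own example $C_{\text{IN}}$, hence $E$, vanishes modulo $4$, so $Z[d]=0$ over $\Z_4$. The paper's route gives the exact subtype of $H[j]$, which it reuses in the next proposition. Yours gives a self-contained, quantitative proof of the lemma itself.

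Two refinements to Step~3. First, the obstacle you flag is not real. For $v\neq 0$ only valuations $0\le c\le e_i-2$ contribute, since $A[j]\mu=0$ once $c\ge e_i-1$. On that range the exponent $(e_i-c)ns-(e_i-1-c)N$ increases with $c$, with maximum $2ns-N<0$. Hence $\Pr_A[v\in C]\le 2p_i^{2ns-N}$ for every nonzero column, whatever its valuation, and you need not pick a minimal-valuation column.

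Second, $A[j]\mu$ depends only on $\mu$ modulo $p_i^{e_i-1}$, so you may sum over $\mu\in\Z_{p_i^{e_i-1}}^{ns}$. This gives $\Pr_A[v\in C]\le\sum_{f=1}^{e_i-1}p_i^{f(ns-N)}$, so your method needs only $N>ns$. The factor $2$ comes from overcounting $\mu$; the argument does not require it.

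Your fallback via a single free row $h$ is too weak as stated, since $hv$ can vanish with probability as large as $1/p_i$. You would have to use all $N-rk(C)$ free rows jointly.
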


For the proof of Lemma \ref{lem: Zj not 0}, we first need the following two theorems for an estimation of the probability that a random code is free.

\begin{theorem} \textnormal{\cite{Byrne2022}}
\label{thm: prob random code free}
    Let $0 < \varepsilon < 1$ be such that 
    \begin{equation}
    \label{eq: pochhammer}
    \lim_{\nu \to \infty}\prod_{\lambda = 1}^\nu \left(1-\frac{1}{p_i^\lambda}\right) \geq 1 - \varepsilon.
    \end{equation}
    Then, the probability that a random code in $\Z_{p_i^{e_i}}$ is free is at least $1 - \varepsilon$.
\end{theorem}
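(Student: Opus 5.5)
We interpret a \emph{random code} in Theorem \ref{thm: prob random code free} the way it is used in the attack: $C$ is the row span of a matrix $G \in \Z_{p_i^{e_i}}^{k \times N}$ with $k \leq N$ whose entries are drawn independently and uniformly. In Lemma \ref{lem: Zj not 0} this matrix will be $A[j]^\top$, or a suitable rescaling of it. The plan is to reduce freeness over $\Z_{p_i^{e_i}}$ to a full-rank condition over the residue field $\mathbb{F}_{p_i}$, and then count full-rank matrices over that field.

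\textbf{Step 1: reduce freeness to a rank condition mod $p_i$.} We claim that $C$ is free of rank $k$ (subtype $(k,0,\dots,0)$) exactly when $\overline{G} := G \bmod p_i$ has rank $k$ over $\mathbb{F}_{p_i}$.
\begin{itemize}
\item Suppose $\overline{G}$ has full row rank. Then some $k\times k$ minor of $\overline{G}$ is nonzero mod $p_i$, so the corresponding $k \times k$ submatrix of $G$ is a unit over $\Z_{p_i^{e_i}}$. Multiplying $G$ on the left by its inverse puts $G$, up to column permutation, in the form $(\Id_k \mid A_{0,1})$. This is the standard form of Definition \ref{def: standard form} with $k_0 = k$ and $k_1 = \dots = k_{e_i-1}=0$, so $C$ is free.
\item Conversely, if $\overline{G}$ is rank deficient, the standard form of $C$ has $k_0 < k$. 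This is because $k_0$ equals the $\mathbb{F}_{p_i}$-dimension of $C \bmod p_i$, and that space is the row space of $\overline{G}$.
\end{itemize}
Consequently, $\Pr[C \text{ free}] \geq \Pr[\operatorname{rk}_{\mathbb{F}_{p_i}}(\overline{G}) = k]$.

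\textbf{Step 2: count over $\mathbb{F}_{p_i}$.} Reduction mod $p_i$ maps the uniform distribution on $\Z_{p_i^{e_i}}$ to the uniform distribution on $\mathbb{F}_{p_i}$, since each residue class has exactly $p_i^{e_i-1}$ preimages. So $\overline{G}$ is a uniformly random $k\times N$ matrix over $\mathbb{F}_{p_i}$. We choose its rows one at a time. Given that the first $j$ rows are linearly independent, the $(j+1)$-st row avoids their span with probability $1 - p_i^{j}/p_i^{N}$. Hence
\begin{equation*}
\Pr[\operatorname{rk}(\overline{G})=k] = \prod_{j=0}^{k-1}\left(1-\frac{1}{p_i^{N-j}}\right) = \prod_{\lambda=N-k+1}^{N}\left(1-\frac{1}{p_i^{\lambda}}\right) \geq \prod_{\lambda=1}^{\infty}\left(1-\frac{1}{p_i^{\lambda}}\right).
\end{equation*}
The last inequality holds because every factor lies in $(0,1)$, so adding factors can only decrease the product, and the infinite product converges since $\sum_\lambda p_i^{-\lambda}<\infty$. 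By hypothesis \eqref{eq: pochhammer}, the right-hand side is at least $1-\varepsilon$, which proves the theorem.

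\textbf{Main obstacle.} The delicate point is the model of randomness. If ``random code'' instead means a uniformly random submodule, possibly with a prescribed cardinality, then Steps 1--2 do not apply directly. In that case one has to count submodules by subtype using the standard-form parametrization of Definition \ref{def: standard form}. One would then show that free submodules dominate, again up to a factor bounded below by the same $q$-Pochhammer-type product. That counting argument is the result of \cite{Byrne2022}. For the attack, however, only the generator-matrix model is needed, and for that model the argument above is self-contained. The one step that needs care in writing is the converse direction of Step 1, namely the identification $k_0 = \dim_{\mathbb{F}_{p_i}}(C \bmod p_i)$.
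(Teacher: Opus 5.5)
Your argument is correct in the model you fix, namely the row space of a uniformly random $k\times N$ matrix over $\Z_{p_i^{e_i}}$ with $k\le N$. It is a genuinely different route from the paper, which gives no proof and imports the statement from \cite{Byrne2022}, where it concerns densities of free codes, i.e.\ proportions of free submodules within a family of submodules.

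These are different probability measures on codes. Already for length $1$ over $\Z_4$, a uniform generator yields $\Z_4$, $2\Z_4$, $\{0\}$ with probabilities $\frac12,\frac14,\frac14$. So, as you note yourself, your proof does not recover the cited result in the counting model.

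What your route buys is twofold. First, it is exactly the model the attack uses. In Lemma~\ref{lem: Zj not 0}, $A[j]^\top$ has independent uniform entries in $p_i\Z_{p_i^{e_i}}$, and under $p_i\Z_{p_i^{e_i}}\cong\Z_{p_i^{e_i-1}}$ this is your setting. Second, you get an explicit finite bound much stronger than the Pochhammer constant:
$\prod_{\lambda=N-k+1}^{N}(1-p_i^{-\lambda})\ge 1-\sum_{\lambda>N-k}p_i^{-\lambda}=1-\frac{p_i^{-(N-k)}}{p_i-1}$.
With $k=ns$ and $N=r(t-1)$, condition \eqref{eq: bound for t 1} gives $N-k>ns$, so the failure probability is below $p_i^{-ns}$. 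This makes the appeal to Theorem~\ref{thm: density free modules} and the case distinction on $p_i$ in Remark~\ref{rk: density of free codes} unnecessary for the freeness step of Lemma~\ref{lem: Zj not 0}.

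Finally, the converse in your Step~1, which you flag as the delicate point, is never used. The inequality only needs that full rank of $\overline{G}$ forces freeness. Rank deficiency does not preclude freeness anyway: a repeated row, for example, gives a free code of smaller rank.
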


\begin{theorem} \textnormal{\cite{Byrne2022}}
\label{thm: density free modules}
    Let $\mathcal{R}$ be a finite ring, $0 < R' < \frac{1}{2}$ and let $ K $ and $n'$ be positive integers such that $K = R'n'$. Then, the density of free codes of rank $K$ in $\mathcal{R}^{n'}$ as $n' \rightarrow \infty$ is $1$.
\end{theorem}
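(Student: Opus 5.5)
The approach is to reduce to local rings and then compare, subtype by subtype, the number of non-free codes of rank $K$ with the number of free ones, using the hypothesis $R'<\frac12$ to make the ratio exponentially small in $n'$.

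\textbf{Reduction to the local case.} First I would reduce to a finite commutative local ring. As the setting of this paper is $\mathcal{R}=\Z_{p_i^{e_i}}$ or $\Z_m$, I will treat $\mathcal{R}$ as commutative. A finite commutative ring decomposes as a product $\mathcal{R}\cong \prod_j \mathcal{R}_j$ of local rings, exactly as $\Phi$ decomposes $\Z_m$. A code $C\subseteq\mathcal{R}^{n'}$ corresponds to a tuple of codes $C_j\subseteq\mathcal{R}_j^{n'}$, and by the analogue of Definition \ref{def: nf(C) over m} it is free of rank $K$ iff every $C_j$ is free of rank $K$. Taking the density of free codes among all codes of rank $K$, i.e.\ with $\max_j rk(C_j)=K$, a union bound over the fixed finite number of factors shows it suffices to prove: for a finite local ring $(\mathcal{R},\mathfrak m)$ with residue field $\mathbb F_q$, the proportion of non-free rank-$K$ codes among rank-$K$ codes tends to $0$. (Components of rank less than $K$ are handled the same way, since they form a non-free type with $k_0<K$.)

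\textbf{Counting in the local case.} The number of free rank-$K$ codes equals the number of $K\times n'$ matrices over $\mathcal{R}$ whose reduction mod $\mathfrak m$ has full rank, divided by $|GL_K(\mathcal{R})|$. This gives
\[
N_{\mathrm{free}}\approx |\mathcal{R}|^{K(n'-K)}=q^{eK(n'-K)} \quad\text{with } |\mathcal{R}|=q^e,
\]
up to a product of factors $(1-q^{-\lambda})$ that is bounded below as in \eqref{eq: pochhammer}.

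For a non-free code of rank $K$, I would use a standard-form description generalising Definition \ref{def: standard form}. Such a code is generated by a free part of rank $k_0<K$ together with $K-k_0$ further generators lying in $\mathfrak m\mathcal{R}^{n'}$, taken modulo the free part. I would bound the number of codes of a fixed subtype by the number of such generator tuples divided by the size of the stabilising group. The result is at most
\[
q^{e k_0 (n'-k_0)}\cdot q^{(e-1)(K-k_0)(n'-k_0)}\cdot \mathrm{poly}
\]
for chain rings. For general local rings I would replace $e-1$ by $\log_q|\mathfrak m|/\log_q q$ with the length adjustment, so the exponent is still strictly smaller.

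\textbf{Exponent comparison and summation.} Comparing exponents, the ratio of a non-free subtype count to $N_{\mathrm{free}}$ is at most
\[
q^{-(K-k_0)\left[(n'-k_0)-e\,(K-k_0)\right]}\ \text{(roughly)},
\]
since the loss of $K-k_0$ free rows costs a factor of about $q^{(K-k_0)(n'-K)}$, while the only gain is from the smaller quotient group. The point of $K=R'n'$ with $R'<\frac12$ is that $n'-K>K$, so the loss dominates the gain linearly in $n'$. Each term is therefore $q^{-c n'}$ for some $c=c(R',e)>0$. There are at most $(K+1)^{e}$ subtypes, which is polynomial in $n'$, so the sum of all ratios tends to $0$ and the density of free codes tends to $1$.

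\textbf{Main obstacle.} The hardest step is making the upper bound on the number of codes of a given non-free subtype precise. This requires correctly counting the overcounting group for non-free generator tuples, and getting the exponent uniformly in the subtype so that the polynomial-many terms can be summed. I would first carry this out exactly for chain rings $\Z_{p^e}$, using the standard form of Definition \ref{def: standard form} with free blocks $A_{i,j}$ and the known formula for the number of submodules of a given type. I would then handle general local rings by a filtration $\mathfrak m^j\mathcal{R}^{n'}$ argument, counting the layers separately.
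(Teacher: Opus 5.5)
The paper does not prove this statement. It only cites Byrne et al., whose result is for finite chain rings. Judged on its own, your plan has a genuine gap at exactly the step you call the main obstacle, and its last step aims at something false.

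\textbf{The exponent comparison does not reach the threshold $\frac12$.} Divide your displayed bound $q^{ek_0(n'-k_0)}q^{(e-1)(K-k_0)(n'-k_0)}$ by $N_{\mathrm{free}}\approx q^{eK(n'-K)}$. Up to the polynomial factor you get $q^{-(K-k_0)[(n'-k_0)-eK]}$, not the expression with $e(K-k_0)$ that you wrote. At $k_0=K-1$ this is $q^{(e+1)K-n'-1}$, so your bound gives nothing once $R'\ge 1/(e+1)$, e.g.\ over $\Z_{p^2}$ with $R'=0.4$. Your ``rough'' expression does not save the argument either: at $k_0=0$ it equals $q^{-K(n'-eK)}$, which is not small for $e\ge 3$ and $R'>1/e$. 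So ``each term is $q^{-cn'}$'' does not follow. The remark about $n'-K>K$ is also not what your formulas use.

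What is missing is the division you flag but never carry out. For a fixed code of subtype $(k_0,K-k_0,0,\dots,0)$, the non-free generators can be recombined among themselves. The free generators can also be shifted by elements of the non-free part. Together these account for a factor of about $q^{(e-1)K(K-k_0)}$, which is exactly the discrepancy.

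The clean route is the classical Birkhoff--Delsarte-type formula for the number of submodules of $\mathcal R^{n'}$ of a given subtype $(k_0,\dots,k_{e-1})$ over a chain ring. It is a product of $q$-binomials and powers of $q$. Up to a constant depending only on $q$ and $e$, it equals $q^{\sum_{i=0}^{e-1}K_i(n'-K_i)}$ with $K_i=k_0+\dots+k_i$, while the free count is $q^{eK(n'-K)}$. Since $x\mapsto x(n'-x)$ is increasing on $[0,n'/2]$ and $K<n'/2$, every non-free subtype loses at least $K(n'-K)-k_0(n'-k_0)=(K-k_0)(n'-K-k_0)\ge(1-2R')n'+1$ in the exponent, uniformly in $e$. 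Summing over the $\binom{K+e-1}{e-1}$ subtypes then gives $o(1)$. This monotonicity is where $R'<\frac12$ actually enters.

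\textbf{The extension to general local rings cannot work, because the statement is false there.} Take $\mathcal R=\mathbb F_q[x,y]/(x,y)^2$, so $|\mathcal R|=q^3$ and $\mathfrak m^2=0$. Consider the codes $C_0+\mathbb F_q v$ with $C_0$ free of rank $K-1$ and $v\in\mathfrak m\mathcal R^{n'}\setminus\mathfrak m C_0$. Each has rank $K$ and is not free. Each such code contains exactly $q^{K-1}$ admissible $C_0$, so there are about $q^{3(K-1)(n'-K+1)}\cdot q^{2(n'-K+1)-1}\cdot q^{-(K-1)}$ of them. That is about $q^{3K-n'-1}$ times the number of free rank-$K$ codes. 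For $\frac13<R'<\frac12$ this ratio tends to infinity, so the density of free codes tends to $0$.

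You should therefore restrict to finite chain rings, as the cited source does. For this paper that is enough: $\Z_m\cong\prod_i\Z_{p_i^{e_i}}$, and your CRT reduction with the union bound is fine.
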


\begin{remark}
\label{rk: density of free codes}
    If $p_i \geq 11$, then the expression in Equation \eqref{eq: pochhammer} is greater or equal than $0.9$. Otherwise, using Theorem \ref{thm: density free modules} and as shown in \cite[Table 2]{Byrne2022}, the density of free codes of rank $K$ converges to $1$ very fast. Indeed, for example if $p_i = 2$ and $R' = 0.4$, the density is equal to $0.999999$.
\end{remark}

\begin{proof}
    We assume, that $t > \frac{2ns}{r} + 1$ and first consider a random code $C'$ with generator matrix $G \in \Z_{p_i^{e_i}}^{ns \times r(t-1)}$. Therefore, $C'$ is a code of length $r(t-1)$. Using the notation of Theorem \ref{thm: density free modules}, we have $n' = r(t-1)$ and choose $K = ns$. Then,
    \begin{equation*}
        R' = \frac{K}{n'} = \frac{ns}{r(t-1)} < \frac{1}{2}.
    \end{equation*}
    As the number of files $t$ is usually large, we can assume that $R'$ is much smaller than $1/2$ and thus, due to Remark \ref{rk: density of free codes}, the density of codes of rank $K$ in $\Z_{p_i^{e_i}}$ is $1-\varepsilon$ for some small $\varepsilon > 0$. Therefore, $C'$ is a free code with high probability. According to Definition \ref{def: standard form}, this means that the subtype of $C'$ is with high probability equal to $(ns,0, \dots, 0)$.\\

    Consider now $A[j]^\top \in p_i\Z_{p_i^{e_i}}^{ns \times r(t-1)}$ as a generator matrix of a code $C$ (of length $r(t-1)$). Different from before, the entries of $A[j]^\top$ are all multiples of $p_i$ which implies that the first entry of the subtype of $C$ is $k_0 = 0$. However, as $p_i\Z_{p_i^{e_i}}$ is isomorphic to $\Z_{p_i^{e_i-1}}$, it follows similarly to before that the subtype of $C$ is with high probability $(0,ns,0, \dots, 0)$. Using Theorem \ref{thm: pc-matrix}, the subtype of the code generated by $H[j]$ is thus with high probability $(r(t-1)-ns,0, \dots, 0,ns)$.\\
    
    For $Z[j]$ to be nonzero, it must therefore hold that $r(t-1)-ns \neq 0$. Indeed,
    \begin{equation*}
        r(t-1) - ns > r \frac{2ns}{r} -ns = ns > 0.
    \end{equation*}
    We conclude that $Z[j] \neq 0$ with high probability.
\end{proof}

Note that by definition of $U$,
\begin{equation*}
    Z[j] = \begin{cases}
        H[j] \cdot E[j], \hspace{14.8mm} \text{if} \hspace{1mm} j = d\\
        H[j] \cdot (E[j] + U[j]), \hspace{1.2mm} \text{else}.
    \end{cases}
\end{equation*}
Therefore, with high probability, the $\Z_{p_i}$-dimension of $Z[d]$ is lower than the $\Z_{p_i}$-dimension of $Z[j]$ for $j \neq d$. For simplicity, we denote the $\Z_{p_i}$-dimension by dim$_i$.

\begin{proposition}
    Let $K$ be the rank of $\textnormal{nf}(\Gamma_{\textnormal{IN}})$. If 
    \begin{equation}
    \label{eq: bound for t}
        t \geq \frac{K+ns}{r}+2,
    \end{equation}
    then with high probability $\dim_i(Z[j]) > \dim_i(Z[d])$ for all $1 \leq j \leq t$ such that $j \neq d$.
\end{proposition}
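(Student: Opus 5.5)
We fix $1 \leq i \leq \ell$ and work over $\Z_{p_i^{e_i}}$ throughout. Since $H[j]$ is invertible-free only on part of its rows, the first step is to understand the row space of $H[j]$. By the proof of Lemma \ref{lem: Zj not 0}, with high probability $A[j]^\top$ generates a code of subtype $(0,ns,0,\dots,0)$. Then, by Theorem \ref{thm: pc-matrix}, $H[j]$ has subtype $(r(t-1)-ns,0,\dots,0,ns)$. In particular, $H[j]$ contains a free part $H_0[j]$ consisting of $r(t-1)-ns$ rows that are linearly independent over $\Z_{p_i^{e_i}}$, together with $ns$ rows that are multiples of $p_i^{e_i-1}$. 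We will give the comparison in terms of $\dim_i$ of the row spans of $Z[j]=H[j](E[j]+U[j])$.

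\textbf{Upper bound for $j=d$.} All rows of $E[d]$ lie in $[\textnormal{nf}(\Gamma_{\textnormal{IN}})]_m$ reduced mod $p_i^{e_i}$, and $U[d]=0$. Hence $\textnormal{rowspan}(Z[d]) \subseteq \textnormal{rowspan}(E[d])\subseteq \textnormal{nf}(\Gamma_{\textnormal{IN}})$. This gives $\dim_i(Z[d]) \leq \dim_i(\textnormal{nf}(\Gamma_{\textnormal{IN}}))$, which is bounded by $K$ times the appropriate power. More usefully, the row span lies inside the fixed module $\textnormal{nf}(\Gamma_{\textnormal{IN}})$, and because $E$ is random it generically attains this module when $r(t-1)-ns\ge K$.

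\textbf{Lower bound for $j\neq d$.} Now $U[j]$ contains the $r$ rows of $U^d$. These rows have entries in $\tilde C_s\cap(C_{\textnormal{IN}}^\perp\setminus C_{\textnormal{IN}})$ and lie outside $\Gamma_{\textnormal{IN}}$. The nonzero positions $(1+\lambda,\gamma+\lambda)$ sit in distinct column blocks, so the $r$ rows are independent modulo $\Gamma_{\textnormal{IN}}$, at least for the prime $p_i$ where $U^d$ is nonzero mod $p_i^{e_i}$ but not in $C_{\textnormal{IN}}$.

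Next, write $Z[j]=H[j]E[j]+H[j]U[j]$. The free rows $H_0[j]$ form a random-looking $(r(t-1)-ns)\times r(t-1)$ full-rank matrix, independent of the position of the $U^d$-rows. The goal is to show that with high probability:
\begin{itemize}
\item[(a)] $\textnormal{rowspan}(H_0[j]E[j])$ still fills $\textnormal{nf}(\Gamma_{\textnormal{IN}})$, exactly as in the case $j=d$;
\item[(b)] the projection of $\textnormal{rowspan}(H_0[j]U[j])$ modulo $\Gamma_{\textnormal{IN}}$ is nonzero.
\end{itemize}
Condition \eqref{eq: bound for t} ensures $r(t-1)-ns \geq K+r$. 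This gives enough free rows to generate the $K$-rank module $\textnormal{nf}(\Gamma_{\textnormal{IN}})$ and simultaneously leave at least $r$ further combinations carrying the $U^d$ contribution. Consequently $\textnormal{rowspan}(Z[j]) \supsetneq \textnormal{nf}(\Gamma_{\textnormal{IN}}) \supseteq \textnormal{rowspan}(Z[d])$, and therefore $\dim_i(Z[j])>\dim_i(Z[d])$.

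\textbf{Main obstacle.} The hardest part is making (a) and (b) rigorous together. It requires a genericity argument: a random full-rank $H_0[j]$ applied to the block matrix $[E[j]\,|\,U[j]]$, whose rows include $r$ rows independent modulo $\Gamma_{\textnormal{IN}}$, keeps rank $K+$ (something positive). I would first fix the $U^d$-rows and choose a kernel-avoiding argument. The probability that a random linear map from a free module of rank $r(t-1)$ to one of rank $r(t-1)-ns$ kills a fixed nonzero vector outside $\Gamma_{\textnormal{IN}}$ is at most about $p_i^{-(r(t-1)-ns)}$. A union bound then handles the subspace. One caveat is that $H[j]$ is determined by $A[j]$ rather than being uniform. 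I would handle this by noting that $A$ is chosen independently of $E,U$, so $H_0[j]$ is independent of them, and the probability estimate follows by conditioning on $A$. The non-free rows $p_i^{e_i-1}(\cdot)$ of $H[j]$ only add elements of $p_i^{e_i-1}\textnormal{rowspan}$ and do not affect the strict inequality.
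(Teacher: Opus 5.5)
Your plan is essentially the paper's proof: the same subtype $(r(t-1)-ns,0,\dots,0,ns)$ of $H[j]$, the same count $r(t-1)-ns\geq K+r$, rows of $E[j]$ generically filling $\textnormal{nf}(\Gamma_{\textnormal{IN}})$, and $U^d$ escaping it for a suitable $p_i$. You organize these as the inclusion $\textnormal{rowspan}(Z[d])\subseteq\textnormal{nf}(\Gamma_{\textnormal{IN}})\subsetneq\textnormal{rowspan}(Z[j])$ rather than the paper's chain $\dim_i(Z[j])=\dim_i(E[j]+U[j])>\dim_i(E[d])=\dim_i(Z[d])$, which makes the bound for $Z[d]$ deterministic, and the joint step you flag as the main obstacle is exactly what the paper also only asserts (``$Z[j]$ and $E[j]+U[j]$ have with high probability the same subtype''). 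That step decouples if you use $\textnormal{rowspan}(H[j])=\{y : yA[j]=0\}$: left-kernel vectors vanishing on the $d$-block (free rank $r(t-2)-ns\geq K$ by the bound on $t$) see no $U^d$ and give (a), while a left-kernel vector with a unit at a $d$-block coordinate whose $U^d$-entry is not in $C_{\textnormal{IN}}$ modulo $p_i^{e_i}$ gives (b); so the vector that $H_0[j]$ must not annihilate modulo $p_i$ is that coordinate vector of $\Z_{p_i^{e_i}}^{r(t-1)}$ (failure probability about $p_i^{-(r(t-1)-ns)}$, as you estimated), not a vector ``outside $\Gamma_{\textnormal{IN}}$'' as your union-bound sketch says.
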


\begin{proof}
    We assume that $t$ fulfils the Inequality  \eqref{eq: bound for t} and first consider $E[j]$ for some $1 \leq j \leq t$. Let $G_\Gamma$ be a generator matrix of nf$(\Gamma_{\text{IN}})$ of dimension $K \times ns$. As rowspan$(E[j]) \subseteq \Gamma_{\text{IN}}$, there exists a matrix $V[j] \in (\Z_{p_i^{e_i}})^{r(t-1) \times K}$ such that $E[j] = V[j] \cdot G_{\Gamma}$.
    It follows from Inequality \eqref{eq: bound for t} that $t > \frac{K}{r} + 1$ and therefore $r(t-1) > K$.
    Then, 
    \begin{align*}
        \p[\langle E[j]\rangle = \text{nf}(\Gamma_{\text{IN}})] & = \p[\langle V[j]\cdot G_\Gamma \rangle = \text{nf}(\Gamma_{\text{IN}})]\\
        & \geq \p[V[j] \hspace{1mm} \text{has subtype} \hspace{1mm} (K,0,\dots,0)].
    \end{align*}
    By construction of $E[j]$, we can argue in a similar way as in the proof of Lemma \ref{lem: Zj not 0}, to obtain that $V[j]$ has with high probability the desired subtype $(K, 0, \dots, 0)$. Consequently, the rowspan of $E[j]$ is with high probability equal to nf$(\Gamma_{\text{IN}})$ which implies that $\dim_i(E[j])$ is with high probability the same for all $1 \leq j \leq t$.\\

    We note further that by the construction of $U$ we know that rowspan$(U[j]) \not \subseteq$ nf$(\Gamma_{\text{IN}})$ over $\Z_m$. Hence, there exists at least one $1 \leq i \leq \ell$ such that rowspan$(U[j]) \not \subseteq$ nf$(\Gamma_{\text{IN}})$ over $\Z_{p_i^{e_i}}$. Therefore, it holds with high probability that
    \begin{equation*}
        \dim_i(E[j] + U[j]) > \dim_i(E[j]) = \dim_i(E[d]).
    \end{equation*}
    Finally, note that as $H[j]$ is in standard form (up to permutation of columns) and generates with high probability a code of subtype $(r(t-1)-ns,0, \dots, 0, ns)$, we can write
    \begin{align*}
        Z[j] = H[j] \cdot (E[j] + U[j]) &=  \begin{pmatrix}
            1 & \star & & & \dots & & & & \star \\
            0 &\ddots & & & & & & & \vdots\\
            & & 1 & \star & \dots & & & & \star\\
            \vdots & & & p_i^{e_i-1} & \star & & \dots & & \star\\
            & & & & \ddots & & & & \vdots\\
            0 & & \dots & & 0& p_i^{e_i-1} & \star& \dots & \star
        \end{pmatrix} \cdot (E[j] + U[j])\\
        &= \begin{pmatrix}
        \begin{pmatrix}
            1 & \star & & & \dots & & & & \star \\
            0 &\ddots & & & & & & & \vdots\\
            & & 1 & \star & \dots & & & & \star\\
        \end{pmatrix} \cdot (E[j] + U[j])\\
        0 \hspace{21mm} \dots \hspace{21mm} 0\\
        \vdots \hspace{48mm} \vdots\\
        0 \hspace{21mm} \dots \hspace{21mm} 0        
        \end{pmatrix},
    \end{align*}
    where the second equation holds by considering the alternative definition of non-free part (see Remark \ref{rk: wrong def. over m}). Due to Condition \eqref{eq: bound for t}, it holds that $r(t-1)-ns \geq K+r$ and hence $Z[j]$ and $E[j] + U[j]$ have with high probability the same subtype and therefore the same $\Z_{p_i^{e_i}}$-dimension.\\
    We conclude that with high probability for $j \neq d$,
    \begin{equation*}
        \dim_i(Z[j]) = \dim_i(E[j] + U[j]) > \dim_i(E[d]) = \dim_i(Z[d]).
    \end{equation*}
\end{proof}

Hence, for every $1 \leq i \leq \ell$, we compute the set
\begin{equation*}
    S_i \coloneqq \min\{\dim_i(Z[j]) \mid 1 \leq j \leq t\}
\end{equation*}
and then with high probability
\begin{equation*}
    \bigcap_{i = 1}^\ell S_i = \{d\}.
\end{equation*}

\begin{remark}
    The code nf$(\Gamma_{\text{IN}})$ has length $ns$ and hence $K \leq ns$. Thus, if
    \begin{equation}
    \label{eq: general bound t}
        t \geq \frac{2ns}{r} + 2,
    \end{equation}
    then $t$ fulfils the Conditions \eqref{eq: bound for t 1} and \eqref{eq: bound for t}. The authors of \cite{Bodur} provide parameters to compute the PIR rate. Using this parameters, we provide in Table \ref{tab: bounds on t} some values of the lower bound given in Equation \eqref{eq: general bound t}.

    \begin{table}[H]
    \begin{center}
    \begin{tabular}{ |c|c|c|c| } 
    \hline
    $n$ & $s$ & $r$ & $t$\\ 
    \hline
    91 & 5 & 4 & $\geq 230$\\ 
    91 & 5 & 5 & $\geq 184$\\ 
    91 & 6 & 6 & $\geq 184$\\ 
    91 & 10 & 10 & $\geq 184$\\
    91 & 5 & 5 & $\geq 184$\\
    \hline
    \end{tabular}
    \caption{Lower bounds for $t$ according to the Bound \eqref{eq: general bound t} for the given parameters in \cite{Bodur}.}
    \label{tab: bounds on t}
    \end{center}
    \end{table}
    \vspace{-5mm}
    We can usually assume that $t$ is very large, i.e., for the parameters given in Table \ref{tab: bounds on t}, we can assume that $t$ fulfils the bound. Otherwise, if $t$ is small, then downloading the whole database is more efficient than using the PIR protocol to obtain the desired file. Hence, for the given parameters, the attack retrieves the index of the desired file with high probability.
\end{remark}

\begin{remark}
    The attack also works if we consider the standard definition of non-free part. In this case, the rank of nf$(\Gamma_{\text{IN}})$ might change (but we still have $K \leq ns$). Moreover, as $E[j] + U[j]$ are not necessarily multiples of $p_i$ any more, $Z[j]$ is then of the form
    \begin{equation*}
        Z[j] = H[j] \cdot (E[j]+U[j]) = \begin{pmatrix}
        \begin{pmatrix}
            1 & \star & & & \dots & & & &  \star \\
            0 &\ddots & & & & & & & \vdots\\
            & & 1 & \star & & \dots & & & \star\\
        \end{pmatrix} \cdot (E[j] + U[j])\\
        \begin{pmatrix}
            p_i^{e_i-1} & \star & & \dots & & & \star \\
            0 &\ddots & & & & & \vdots\\
            & & p_i^{e_i-1} & \star & \dots & & \star\\
        \end{pmatrix} \cdot (E[j] + U[j])        
        \end{pmatrix}.
    \end{equation*}
    However, we can just consider the upper half of $Z[j]$, i.e., the multiplication of the free part of $H[j]$ with $E[j] + U[j]$, and continue as before. 
\end{remark}

The cost of the attack is dominated by bringing $A[j]^\top$ and $Z[j]$ into standard form to compute $H[j]$, respectively $\dim_i(Z[j])$. This cost can be approximated by the cost of performing Gaussian elimination, i.e., $\mathcal{O}(n^3s^3)$ for $A[j]^\top$ and $\mathcal{O}(r^3(t-1)^3)$ for $Z[j]$. As we assume that $t$ fulfils the Inequality \eqref{eq: general bound t}, we obtain that the cost of computing $S_i$ for $1 \leq i \leq \ell$ is equal to $\mathcal{O}(t \cdot r^3(t-1)^3)$. Therefore, the overall complexity of the attack is $\mathcal{O}(\ell\cdot t \cdot r^3(t-1)^3)$.\\
Note that the computations can be parallelized since the sets $S_i$ can be computed independently.

\begin{remark}
    To validate the practicality of our attack, we implemented it using Magma \cite{Magma}. Our experiments demonstrate that the attack successfully recovers the target index in all test cases, while the computational overhead aligns with our theoretical analysis.
\end{remark}

\section{Conclusion}
In this paper, we showed that the PIR scheme presented by Bodur, Mart\'{\i}nez-Moro and Ruano needs an additional condition to enable the user to retrieve the desired file. Moreover, we proved that this scheme is not secure. Indeed, we showed that the attack in \cite{Bordage2021} can be adapted to rings by comparing the $\Z_{p_i}$-dimension of the matrices $Z[j]$ instead of their ranks. This attack is successful with high probability if the number of files $t$ is large enough, i.e., if it is above a certain lower bound. Consequently, replacing fields by rings does not prevent the rank difference attack.

\section*{Acknowledgements}
We would like to thank Gökberg Erdoğan, Technical University of Munich, and Giulia Cavicchioni, German Aerospace Center, for stimulating discussions and insightful suggestions. This work has been supported by funding from Agentur
für Innovation in der Cybersicherheit GmbH.

\section*{ORCID}
Luana Kurmann \url{https://orcid.org/0009-0005-7911-6843}\\
Svenja Lage \url{https://orcid.org/0009-0004-9026-863X}\\
Violetta Weger \url{https://orcid.org/0000-0001-9186-2885}

\bibliographystyle{ieeetr}
\bibliography{references}

\end{document}